\documentclass[preprint,12pt,nopreprintline]{elsarticle}
\usepackage{amsmath,amsthm,amssymb,amsfonts}
\usepackage{graphicx}
\usepackage{subfig}
\usepackage{cuted}
\usepackage{textcomp}
\usepackage{color}
\usepackage{xcolor}
\usepackage{multirow}
\usepackage{multicol,lipsum}
\setcounter{tocdepth}{3}
\usepackage{epsfig}
\usepackage{algorithm}
\usepackage{algorithmic}
\usepackage{multirow}
\usepackage[T1]{fontenc}
\usepackage[utf8]{inputenc}
\usepackage{hyperref}
\usepackage{tabularx}
\usepackage{adjustbox}
\usepackage{mathtools}
\usepackage{bm}
\usepackage{listings}
\newtheorem{remark}{Remark}
\newtheorem{definition}{Definition}
\newtheorem{theorem}{Theorem}
\newtheorem{problem}{Problem}
\newtheorem{proposition}{Proposition}
\newtheorem{lemma}{Lemma}
\newtheorem{example}{Example}
\newtheorem{corollary}{Corollary}

\usepackage{array}
\usepackage{float}
\usepackage[short]{optidef}
\usepackage{mdframed}
\mdfdefinestyle{mystyle}{
    backgroundcolor=white!10}

\newcommand{\oomit}[1]{}

\begin{document}

\begin{frontmatter}

\title{Finite-time Safety and Reach-avoid Verification of Stochastic Discrete-time Systems}

\author{Bai Xue} 
\date{}

\affiliation{organization={Key Laboratory of System Software (Chinese Academy of Sciences) and State Key Laboratory of Computer Science, Institute of Software, Chinese Academy of Sciences},
            city={Beijing, China},
            postcode={100190}, 
            country={China}\\
            Email: xuebai@ios.ac.cn}

\begin{abstract}
This paper studies finite-time safety and reach-avoid verification for stochastic discrete-time dynamical systems. The aim is to ascertain lower and upper bounds of the probability that, within a predefined finite-time horizon, a system starting from an initial state in a safe set will either exit the safe set (safety verification) or reach a target set while remaining within the safe set until the first encounter with the target (reach-avoid verification). We introduce novel barrier-like sufficient conditions for characterizing these bounds, which either complement existing ones or fill gaps. Finally, we demonstrate the efficacy of these conditions on two simple examples.   
\end{abstract}

\begin{keyword}
Stochastic Discrete-time Systems \sep  Finite-time Safety Verification \sep  Finite-time Reach-avoid Verification \sep  Barrier-like Conditions \sep  Lower and Upper Probability Bounds
\end{keyword}

\end{frontmatter}

\section{Introduction}
\label{sec:intro}
Temporal verification plays a pivotal role in modern systems analysis, especially in the realm of complex systems where temporal behavior holds utmost significance \cite{prajna2007convex}. It entails a rigorous scrutiny of a system's compliance with temporal properties, including safety and reach-avoid guarantees, to ensure desired outcomes while circumventing undesirable events. Formal methods such as model checking \cite{clarke1997model,abate2010approximate} and theorem proving \cite{manna2012temporal} serve as indispensable tools in this endeavor, enabling precise and thorough analysis of temporal specifications.

Barrier certificates were initially proposed for deterministic systems as a popular formal approach to temporal verification in \cite{prajna2004safety}. They offer Lyapunov-like assurances regarding system behavior, with the mere existence of a barrier function being sufficient to establish the satisfiability of safety and/or reachability specifications  in \cite{prajna2007convex}. Subsequent efforts have focused on adapting and enhancing deterministic barrier functions, as well as broadening their applications \cite{ames2019control,taylor2020learning,xue2022reach}. However, many real-world applications are susceptible to stochastic disturbances and are thus modeled as stochastic systems. In the stochastic setting, safety verification over the infinite time horizon via barrier certificates was introduced alongside its deterministic counterpart in \cite{prajna2007framework}. Utilizing Ville’s Inequality \cite{ville1939etude}, \cite{prajna2007framework} constructed a non-negative barrier function and provided a sufficient condition for upper bounding the probability of eventually entering an unsafe region from a given initial state while remaining within a state constraint set. More recently, a new barrier function, constructed by relaxing a set of equations, was proposed for lower bounding the probability of eventually entering an unsafe or desired region from an initial state while adhering to state constraints in  \cite{xue2021reach,xue2024}. These barrier functions were further extended to determine the lower and upper bounds of the safety probability over the infinite time horizon for a specified safe set and set of initial states in \cite{yu2023safe}. Furthermore, by formulating barrier certificates via a relaxation of Bellman equations, \cite{xue2024sufficient} established necessary and sufficient conditions for lower and upper bounds the safety and reach-avoid probabilities of stochastic discrete-time systems over the infinite time horizon. In addition, under the assumption that a robust invariant set exists and the system evolves within this robust invariant set, \cite{vzikelic2023learning} proposed a new barrier certificate, termed reach-avoid supermartingale, to guarantee satisfaction of reach-avoid specifications as well as facilitate reach-avoid controllers. Barrier certificates have also been extended to infinite-time horizon probabilistic program analysis, where they verify properties such as (positive) almost-sure termination, probabilistic termination, assertion violations, and reachability (e.g., \cite{chakarov2013probabilistic,mciver2017new,moosbrugger2021automated,kenyon2021supermartingales,chatterjee2017stochastic,chatterjee2022sound,majumdar2025sound,wang2021quantitative,takisaka2021ranking}). Specially, the probabilistic termination analysis discussed in \cite{chatterjee2017stochastic,chatterjee2022sound,majumdar2025sound} exhibits interesting connections with classical analysis for stochastic discrete-time systems. When seeking a lower bound for termination probability, this analysis shares conceptual similarities with classical reach-avoid analysis. This connection arises because the termination probability essentially measures the likelihood that a program, beginning from a specific initial state, will eventually reach terminal states while preserving a stochastic invariant set throughout its execution prior to termination.
Conversely, the probabilistic termination analysis presented in \cite{majumdar2025sound}, which focuses on establishing upper bounds for termination probability, parallels classical safety analysis in stochastic discrete-time systems. In this case, the termination probability represents the chance that a program will eventually violate (escape from) a stochastic invariant set (which excludes the terminal state) when starting from a given initial state. Moreover, the application of barrier certificates has been expanded to encompass both qualitative and quantitative analysis of $\omega$-regular properties \cite{dimitrova2016probabilistic,abate2024stochastic,abate2025quantitative,henzinger2025supermartingale}.

On the other hand, finite-time temporal verification holds greater practical significance as most real-world systems operate within well-defined time constraints. Drawing inspiration from \cite{kushner1967stochastic}, the concept of a $c$-martingale was introduced in \cite{steinhardt2012finite} for stochastic continuous-time systems modelled by stochastic differential equations, enabling a controlled increase in the expected certificate value at each time step and offering an upper bound on the exit probability of leaving safe sets within bounded time horizons. Afterwards, \cite{mathiesen2022safety} proposed a computational method to find a $c$-martingale expressed by neural networks for finite-time safety verification of stochastic discrete-time systems. The $c$-martingales in \cite{mathiesen2022safety} are a typical case of the proposed barrier function in the present work. Under the assumption that the system evolves within a robust invariant set, \cite{jagtap2018temporal} extended the $c$-martingale framework to address temporal logic verification for discrete-time systems, and later, \cite{jagtap2020formal} synthesized control policies for discrete-time stochastic control systems together with a lower bound on the probability that the systems satisfy complex temporal properties. \cite{santoyo2021barrier} utilized barrier-like results introduced in \cite{kushner1967stochastic} to address the challenges of finite-time safety verification and the synthesis of safe controllers for stochastic discrete-time systems, employing semi-definite programming techniques. 
The detailed description of the verification problem in \cite{santoyo2021barrier} and its relationship with the verification problem are illustrated in Remark \ref{comparionwithsan}. On the other hand, the aforementioned works on finite-time temporal verification offer only upper bounds of the probability of reaching undesirable sets (equivalently, lower bounds of the probability of staying within desirable sets). Such works do not provide the lower bounds. This work will offer both types of bounds. Recent work by \cite{zhi2024unifying} introduced barrier functions for bounding probabilistic safety (lower and upper bounds) across infinite and finite time horizons. Its analysis, similar to \cite{jagtap2018temporal,jagtap2020formal}, relies on the assumption that system evolution remains within a robust invariant set. However, as critically examined in \cite{yu2023safe}, this strong invariance requirement constitutes a significant limitation. In contrast, the current work develops new barrier functions that eliminate the need for such a restrictive assumption. Similar to the infinite-time case, barrier certificates have also been extended to the analysis of probabilistic programs within bounded time horizons (e.g., \cite{chatterjee2017stochastic,kura2019tail,chatterjee2024quantitative}). \cite{chatterjee2017stochastic} (e.g., Lemma 3) presented a sufficient condition based on an $\epsilon$-repulsing supermartingale supported by a pure invariant, utilizing Azuma's and Hoeffding's inequalities to derive an upper bound for programs that reach a specified set exactly at a given step. Afterward, in the context where programs terminate almost surely, \cite{kura2019tail} proposed a sufficient condition for establishing lower bounds on program termination within bounded time horizons; \cite{wang2021central} developed lower and upper bounds for the tail bound problem which can also be employed to bound the probability that programs terminate within bounded time horizons. Recently, in conjunction with stochastic invariants \cite{chatterjee2017stochastic}, \cite{chatterjee2024quantitative} investigated the tail bound problem for programs that do not necessarily terminate almost surely. This approach can also be employed to establish lower bounds on the probability that programs terminate within bounded time horizons.

This paper investigates the finite-time safety and reach-avoid verification of stochastic discrete-time systems. The finite-time safety verification problem aims to compute both lower and upper bounds of the probability that a system, starting from an initial state in a safe set, will exit the safe set throughout a given bounded time interval. From a reachability perspective, it involves computing lower and upper bounds on the probability of reaching the complement of the safe set within the specified bounded time interval. Thus, it exclusively addresses safety or reachability concerns. In contrast, finite-time reach-avoid verification integrates guarantees of safety and reachability. It seeks to establish lower and upper bounds on the probability that a system, starting from an initial state in a safe set, will reach a target set within a designated bounded time interval while ensuring it remains within the safe set before reaching the target set. Although these two problems are interconnected, they differ fundamentally in essence, as we will elaborate on in the preliminaries section. We propose novel barrier-like conditions to address these two problems. These conditions either complement existing ones or fill gaps, facilitating the attainment of tight probability bounds for some systems. Finally, we demonstrate the effectiveness of the proposed conditions on two numerical examples, utilizing semi-definite programming tools.

The main contributions of this work are summarized below. 
\begin{enumerate}
    \item This work studies the finite-time safety and reach-avoid verification of stochastic discrete-time systems. Compared with existing works \cite{santoyo2021barrier,mathiesen2022safety,kushner1967stochastic}, which merely offered upper bounds on the relevant probabilities, our contribution goes beyond by providing both lower and upper bounds. These bounds deepen our understanding of the system's characteristics and yield a more accurate estimation of the true probability, thus enhancing the overall rigour and precision of the finite-time safety and reach-avoid analysis. In addition, obtaining both lower and upper bounds also facilitates evaluating their mutual tightness in practice.
    \item The proposed barrier-like conditions for upper bounding the probabilities complement existing ones in \cite{mathiesen2022safety,kushner1967stochastic}, facilitating the gain of sharper upper bounds for some systems. 
    \item Unlike the approach in \cite{zhi2024unifying}, our proposed barrier-like conditions for bounding probabilities in finite-time safety verification eliminate the need for the strong invariance assumption. Additionally, this work extends these barrier-like conditions to finite-time reach-avoid verification, providing both lower and upper probability bounds.
    \end{enumerate}

This paper is structured as follows: in Section \ref{sec:pre}, we formulate the finite-time safety and reach-avoid verification problems. In Sections \ref{sec:comp} and \ref{sec:comp1}, we introduce our barrier-like conditions for addressing these two problems, respectively. After demonstrating the performance of proposed conditions on two examples in Section \ref{sec:ex}, we conclude this paper in Section \ref{sec:con}.

\section{Preliminaries}
\label{sec:pre}
We start the exposition by a formal introduction of discrete-time systems subject to stochastic disturbances and finite-time safety/reach-avoid verification problems of interest. Before posing the problem studied, let me introduce some basic notions used throughout this paper: $\mathbb{R}$ denotes the set of real values; $\mathbb{N}$ denotes the set of nonnegative integers; $\mathbb{N}_{\leq k}$ is the set of non-negative integers being less than or equal to $k$; for sets $\Delta_1$ and $\Delta_2$, $\Delta_1\setminus \Delta_2$ denotes the difference of sets $\Delta_1$ and $\Delta_2$, which is the set of all elements in $\Delta_1$ that are not in $\Delta_2$;  $1_A(\bm{x})$ denotes the indicator function in the set $A$,
where, if $\bm{x}\in A$, then $1_A(\bm{x}) = 1$ and if $\bm{x}\notin A$, $1_A(\bm{x}) = 0$.

This paper considers stochastic discrete-time systems that are modeled by stochastic difference equations of the following form:
\begin{equation}
\label{system}
\begin{split}
\bm{x}(l+1)=\bm{f}(\bm{x}(l),\bm{\theta}(l)), \forall l\in \mathbb{N},
\end{split}
\end{equation}
where $\bm{x}(l)\in \mathbb{R}^n$ is the state at time $l$ and $\bm{\theta}(l)\in \Theta$ with $\Theta \subseteq \mathbb{R}^m$ is the stochastic disturbance at time $l$. In addition, let $\bm{\theta}(0), \bm{\theta}(1), \ldots$ be i.i.d. (independent and identically distributed) random vectors on a probability space $(\Theta,\mathcal{F},\mathbb{P})$, and take values in $\Theta$ with the following probability distribution: for any measurable set $B\subseteq \Theta$, 
\[{\rm Prob}(\bm{\theta}(l)\in B)=\mathbb{P}(B), \quad\forall l\in \mathbb{N}.\]
The expectation $\mathbb{E}[\cdot]$ is defined with respect to the probability measure $\mathbb{P}$.

Before defining trajectories of system \eqref{system}, we define a disturbance signal.
\begin{definition}
A disturbance signal $\pi$ is a sample path of the stochastic process  $\{\bm{\theta}(i)\colon \Theta\rightarrow \Theta,i\in \mathbb{N}\}$, which is defined on the canonical sample space $\Theta^{\infty}$, endowed with its product topology $\mathcal{B}(\Theta^{\infty})$, with the probability measure $\mathbb{P}^{\infty}$. The expectation associated with the probability measure $\mathbb{P}^{\infty}$ is denoted by $\mathbb{E}^{\infty}[\cdot]$.
\end{definition}

A disturbance signal $\pi$ together with an initial state $\bm{x}_0\in \mathbb{R}^n$ induces a unique discrete-time trajectory as follows.
\begin{definition}
Given a disturbance signal $\pi$ and an initial state $\bm{x}_0\in \mathbb{R}^n$, a trajectory of system \eqref{system} is denoted as  $\bm{\phi}_{\pi}^{\bm{x}_0}(\cdot)\colon\mathbb{N}\rightarrow \mathbb{R}^n$ with $\bm{\phi}_{\pi}^{\bm{x}_0}(0)=\bm{x}_0$, i.e.,
$\bm{\phi}_{\pi}^{\bm{x}_0}(l+1)=\bm{f}(\bm{\phi}_{\pi}^{\bm{x}_0}(l),\bm{\theta}(l)), \forall l\in \mathbb{N}.$
\end{definition}

In this study, we address two verification problems for the system governed by \eqref{system} with a finite time horizon $[0, N]$, where $N \in \mathbb{N}$. The first problem pertains to a finite-time safety verification problem, examining the likelihood of the system exiting the safe set $\mathcal{X}$ throughout its evolution over $[0,N]$, starting from $\bm{x}_0\subseteq \mathcal{X}$. The second one involves a finite-time reach-avoid task, focusing on the probability that the system enters the target set $\mathcal{X}_r \subset \mathcal{X}$ safely within the time horizon $[0, N]$, given an initial state $\bm{x}_0 \in \mathcal{X}\setminus \mathcal{X}_r$.

\begin{problem}[Finite-time Safety Verification]
\label{safe}
Given a finite time interval $[0,N]$ with $N\in \mathbb{N}$, a safe set $\mathcal{X}$, and an initial state $\bm{x}_0\in \mathcal{X}$, the finite-time safety verification problem is to compute  $\epsilon_1 \in [0,1]$ and $\epsilon_2\in [0,1]$ which are respectively the lower and upper bounds of the exit probability, with which system \eqref{system} starting from $\bm{x}_0$ will exit the safe set $\mathcal{X}$ within $[0,N]$, i.e., 
 \[\epsilon_1\leq \mathbb{P}^{\infty}\left(
\begin{aligned}
\exists k\in \mathbb{N}_{\leq N}. \bm{\phi}_{\pi}^{\bm{x}_0}(k)\notin \mathcal{X}
\end{aligned}
\middle|  \bm{x}_0\in \mathcal{X}
\right)\leq \epsilon_2.\]
\end{problem} 

\begin{problem}[Finite-time Reach-avoid Verification]
\label{ravoid}
Given a finite time interval $[0,N]$ with $N\in \mathbb{N}$, a safe set $\mathcal{X}$, a target set $\mathcal{X}_r\subseteq \mathcal{X}$, and an initial state $\bm{x}_0\in \mathcal{X}\setminus \mathcal{X}_r$,  the finite-time reach-avoid verification problem is to compute $\epsilon_1 \in [0,1]$ and $\epsilon_2\in [0,1]$ which are respectively the lower and upper bounds of the reach-avoid probability,  with which system \eqref{system} starting from $\bm{x}_0$ will enter the target set $\mathcal{X}_r$ within $[0,N]$ while staying inside the set $\mathcal{X}$ before the first target hitting time, i.e., 
\[\epsilon_1\leq \mathbb{P}^{\infty}\left(
\begin{aligned}
&\exists k\in \mathbb{N}_{\leq N}. \bm{\phi}_{\pi}^{\bm{x}_0}(k)\in \mathcal{X}_r\\
&\wedge \forall l\in \mathbb{N}_{\leq k}. \bm{\phi}_{\pi}^{\bm{x}_0}(l)\in \mathcal{X}
\end{aligned}
\middle|  \bm{x}_0\in \mathcal{X}\setminus \mathcal{X}_r
\right)\leq \epsilon_2.\]
\end{problem}

Like stochastic barrier functions based methods, which formulate sufficient conditions for mainly computing upper bounds of the probability of reaching unsafe sets eventually,  we in this paper will propose barrier-like conditions for addressing Problem \ref{safe} and \ref{ravoid}. 

It is noteworthy to mention that by treating system \eqref{system} within an extended domain $\widehat{\mathcal{X}}$, which includes the safe set $\mathcal{X}$ and all one-step reachable states from $\mathcal{X}$, and interpreting $\widehat{\mathcal{X}} \setminus \mathcal{X}$ as a target region as shown in the sequel, the conditions derived for Problem \ref{ravoid} can be adapted to tackle Problem \ref{safe} since $\mathbb{P}^{\infty}(\exists k\in \mathbb{N}_{\leq N}. \bm{\phi}_{\pi}^{\bm{x}_0}(k)\notin \mathcal{X} \mid  \bm{x}_0\in \mathcal{X})=\mathbb{P}^{\infty}(\exists k\in \mathbb{N}_{\leq N}. \bm{\phi}_{\pi}^{\bm{x}_0}(k)\in \widehat{\mathcal{X}}\setminus \mathcal{X} \wedge  \forall l\in \mathbb{N}_{\leq k}. \bm{\phi}_{\pi}^{\bm{x}_0}(l)\in \widehat{\mathcal{X}} \mid  \bm{x}_0\in \mathcal{X})$. However, Problem \ref{safe} and \ref{ravoid} are different. From a reachability perspective, the discrepancy arises from whether the target set is contained within the safe set $\mathcal{X}$ or not. When the target set is outside the safe set, as in the safety verification problem in Problem \ref{safe}, reaching the target set $\widehat{\mathcal{X}}\setminus \mathcal{X}$ is equivalent to exiting the safe set $\mathcal{X}$. However, this  does not hold for the reach-avoid problem in Problem \ref{ravoid}. Consequently, the conditions derived for Problem \ref{ravoid} should be refined to address Problem \ref{safe}. Thus, to underscore the improvements and maintain clarity, we treat these problems as separate entities in this paper.

\section{Finite-time Safety Verification}
\label{sec:comp}
In this section, we present our sufficient conditions for characterizing upper and lower bounds on the probability in Problem  \ref{safe}. The sufficient condition for lower bounds is formulated in Subsection \ref{sub:sclb1} and the one for upper bounds is introduced in Subsection \ref{sub:scub1}.

In accordance with the methodology in \cite{yu2023safe}, we define a switched system that is constructed by freezing the dynamics of system \eqref{system} upon exiting the safe set $\mathcal{X}$. This switched system will facilitate the construction of sufficient conditions for lower and upper bounding the exit probability in the sequel.
\begin{definition}
\label{switch}
The switched  stochastic discrete-time system, which is built upon system \eqref{system}, is a quadruple $(\widetilde{\mathcal{L}},\widetilde{\mathcal{X}},\bm{x}_0, \widetilde{\bm{f}})$ with the following components:
\begin{enumerate}
\item[-] $\widetilde{\mathcal{L}}=\{1,2\}$ is a set of two locations;
\item[-] $\widetilde{\mathcal{X}}\subseteq \mathbb{R}^n$ is the state constraint set;
\item[-] $\bm{x}_0\in \widetilde{\mathcal{X}}$ is the initial state;
\item[-] $\widetilde{\bm{f}}(\cdot,\cdot)\colon\mathbb{R}^n \times \Theta \rightarrow \mathbb{R}^n$, where \[\widetilde{\bm{f}}(\bm{x},\bm{\theta})=\sum_{i=1}^2 1_{\widetilde{\mathcal{X}}_i}(\bm{x})\widetilde{\bm{f}}_i(\bm{x},\bm{\theta})\] with $\widetilde{\bm{f}}_1(\bm{x},\bm{\theta})=\bm{f}(\bm{x},\bm{\theta})$ and  $\widetilde{\bm{f}}_2(\bm{x},\bm{\theta})=\bm{x}$, and $1_{\widetilde{\mathcal{X}}_i}(\bm{x})$ is the indicator function of the set $\widetilde{\mathcal{X}}_i$, i.e., $1_{\widetilde{\mathcal{X}}_i}(\bm{x})=1$ if $\bm{x}\in \widetilde{\mathcal{X}}_i$; otherwise, $1_{\widetilde{\mathcal{X}}_i}(\bm{x})=0$,
\end{enumerate}
where 
\begin{enumerate}
\item $\widetilde{\mathcal{X}}$ is a set satisfying $\widetilde{\Omega}\subset\widetilde{\mathcal{X}}$, where $\widetilde{\Omega}$ is the union of $\mathcal{X}$ and
all reachable states starting from $\mathcal{X}$ in one step, i.e.,  
\[\widetilde{\Omega}=\{\bm{x}'\in \mathbb{R}^n\mid \bm{x}'=\bm{f}(\bm{x},\bm{\theta}), \bm{x}\in \mathcal{X}, \bm{\theta}\in \Theta\}\cup \mathcal{X};\]
\item $\widetilde{\mathcal{X}}_1=\mathcal{X}$;
\item $\widetilde{\mathcal{X}}_2=\widetilde{\mathcal{X}}\setminus \mathcal{X}$.
\end{enumerate}
The evolution of the state of this system is governed by the iterative map
\begin{equation}
\label{sdss1}
\begin{split}
&\bm{x}(l+1)=\widetilde{\bm{f}}(\bm{x}(l),\bm{\theta}(l)), \forall l\in \mathbb{N},\\
&\bm{x}(0)=\bm{x}_0\in \widetilde{\mathcal{X}}.
\end{split}
\end{equation}
\end{definition}

The trajectory of system \eqref{sdss1}, induced by initial state $\bm{x}_0\in \widetilde{\mathcal{X}}$ and disturbance policy $\pi$, is denoted by 
$\widetilde{\bm{\phi}}_{\pi}^{\bm{x}_0}(\cdot)\colon \mathbb{N} \rightarrow \mathbb{R}^n$. 

It is observed that if system \eqref{system} starting from $\bm{x}_0\in \mathcal{X}$ enters the set $\widetilde{\mathcal{X}}\setminus \mathcal{X}$ initially at time $i\in \mathbb{N}$, system \eqref{sdss1} starting from $\bm{x}_0\in \mathcal{X}$ will also enter it for the first time at this time instant, and vice versa. Furthermore, when system \eqref{sdss1} transitions into the set $\widetilde{\mathcal{X}}\setminus \mathcal{X}$, it remains there indefinitely. Thus, the set $\widetilde{\mathcal{X}}$ is a robust invariant for system \eqref{sdss1} \cite{yu2023safe}, that is, trajectories of system \eqref{sdss1} originating from the set $\widetilde{\mathcal{X}}$ will never leave it. Consequently, the probability of system \eqref{sdss1} entering $\widetilde{\mathcal{X}}\setminus \mathcal{X}$ at time $t=i$ is equal to the probability of system \eqref{system} entering this set up to time $t=i$. 
\begin{lemma}
\label{lema1}
For $i\in \mathbb{N}$, $\mathbb{P}^{\infty}(\exists k\in \mathbb{N}_{\leq i}. \bm{\phi}_{\pi}^{\bm{x}_0}(k)\in \widetilde{\mathcal{X}}\setminus \mathcal{X} \mid \bm{x}_0\in \mathcal{X})=\mathbb{P}^{\infty}(\widetilde{\bm{\phi}}_{\pi}^{\bm{x}_0}(i)\in \widetilde{\mathcal{X}}\setminus \mathcal{X}  \mid \bm{x}_0\in \mathcal{X})$. Thus, $\mathbb{P}^{\infty}(\widetilde{\bm{\phi}}_{\pi}^{\bm{x}_0}(i)\in \widetilde{\mathcal{X}}\setminus \mathcal{X}  \mid \bm{x}_0\in \mathcal{X})\leq \mathbb{P}^{\infty}(\widetilde{\bm{\phi}}_{\pi}^{\bm{x}_0}(j)\in \widetilde{\mathcal{X}}\setminus \mathcal{X}  \mid \bm{x}_0\in \mathcal{X})$ if $i\leq j$.
\end{lemma}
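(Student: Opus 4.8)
The plan is to establish the stated equality by first proving a \emph{pathwise} equivalence of the two events (i.e.\ for each fixed disturbance signal $\pi$), and then reading off the monotonicity inequality as an immediate corollary. Fix $\bm{x}_0\in\mathcal{X}$ and a signal $\pi\in\Omega$, and let
\[\tau=\tau(\pi,\bm{x}_0)=\inf\{k\in\mathbb{N}\mid \bm{\phi}_{\pi}^{\bm{x}_0}(k)\in\widetilde{\mathcal{X}}\setminus\mathcal{X}\}\]
denote the first time the \emph{original} trajectory enters $\widetilde{\mathcal{X}}\setminus\mathcal{X}$ (with $\tau=\infty$ if this never happens). The event $\{\exists k\in\mathbb{N}_{\leq i}.\,\bm{\phi}_{\pi}^{\bm{x}_0}(k)\in\widetilde{\mathcal{X}}\setminus\mathcal{X}\}$ is then exactly $\{\tau\leq i\}$, so it suffices to show the set identity $\{\tau\leq i\}=\{\widetilde{\bm{\phi}}_{\pi}^{\bm{x}_0}(i)\in\widetilde{\mathcal{X}}\setminus\mathcal{X}\}$ for every $\pi$; taking the expectation of the indicator of this common event against $\mathbb{P}^{\infty}$ then yields the equality of probabilities.

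The core step---and the main obstacle---is to prove by induction that the two trajectories coincide up to and including time $\tau$, namely $\widetilde{\bm{\phi}}_{\pi}^{\bm{x}_0}(k)=\bm{\phi}_{\pi}^{\bm{x}_0}(k)$ for all $k\leq\tau$, together with the auxiliary claim $\bm{\phi}_{\pi}^{\bm{x}_0}(k)\in\mathcal{X}$ for all $k<\tau$. The delicate point is the auxiliary claim: a priori the original system could leave $\mathcal{X}$ without landing in $\widetilde{\mathcal{X}}\setminus\mathcal{X}$, which would sever the identification of $\tau$ with the first exit time from $\mathcal{X}$. This is precisely where the construction of $\widetilde{\mathcal{X}}$ enters: since every one-step successor of a point in $\mathcal{X}$ lies in $\widetilde{\Omega}\subseteq\widetilde{\mathcal{X}}$, a trajectory sitting in $\mathcal{X}$ can only move to a state in $\widetilde{\mathcal{X}}$, so any departure from $\mathcal{X}$ is immediately a departure into $\widetilde{\mathcal{X}}\setminus\mathcal{X}$. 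Hence $\bm{\phi}_{\pi}^{\bm{x}_0}(k)\in\mathcal{X}$ for all $k<\tau$, and on this initial segment $\widetilde{\bm{f}}$ acts as $\bm{f}$ (because $\widetilde{\mathcal{X}}_1=\mathcal{X}$ and $\widetilde{\bm{f}}_1=\bm{f}$), which gives the claimed coincidence.

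With the coincidence in hand I would conclude as follows. If $\tau\leq i$, then $\widetilde{\bm{\phi}}_{\pi}^{\bm{x}_0}(\tau)=\bm{\phi}_{\pi}^{\bm{x}_0}(\tau)\in\widetilde{\mathcal{X}}\setminus\mathcal{X}$, and since $\widetilde{\bm{f}}_2(\bm{x},\bm{\theta})=\bm{x}$ freezes the switched state once it is in $\widetilde{\mathcal{X}}_2=\widetilde{\mathcal{X}}\setminus\mathcal{X}$, a trivial induction gives $\widetilde{\bm{\phi}}_{\pi}^{\bm{x}_0}(k)\in\widetilde{\mathcal{X}}\setminus\mathcal{X}$ for all $k\geq\tau$, in particular at $k=i$. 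Conversely, if $\tau>i$ then at $k=i<\tau$ the coincidence and auxiliary claim give $\widetilde{\bm{\phi}}_{\pi}^{\bm{x}_0}(i)=\bm{\phi}_{\pi}^{\bm{x}_0}(i)\in\mathcal{X}$, so $\widetilde{\bm{\phi}}_{\pi}^{\bm{x}_0}(i)\notin\widetilde{\mathcal{X}}\setminus\mathcal{X}$. This establishes the pathwise identity $\{\tau\leq i\}=\{\widetilde{\bm{\phi}}_{\pi}^{\bm{x}_0}(i)\in\widetilde{\mathcal{X}}\setminus\mathcal{X}\}$, hence the equality. The monotonicity then drops out for free: the freezing property shows $\{\widetilde{\bm{\phi}}_{\pi}^{\bm{x}_0}(i)\in\widetilde{\mathcal{X}}\setminus\mathcal{X}\}\subseteq\{\widetilde{\bm{\phi}}_{\pi}^{\bm{x}_0}(j)\in\widetilde{\mathcal{X}}\setminus\mathcal{X}\}$ whenever $i\leq j$ (equivalently $\{\tau\leq i\}\subseteq\{\tau\leq j\}$), and monotonicity of the measure $\mathbb{P}^{\infty}$ gives the desired inequality.
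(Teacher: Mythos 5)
Your proposal is correct and follows essentially the same route as the paper, which justifies the lemma (informally, in the paragraph preceding it) by exactly the two observations you formalize: the original and switched trajectories share the same first entry time into $\widetilde{\mathcal{X}}\setminus\mathcal{X}$, and once the switched system enters that set the frozen dynamics keep it there forever, so being there at time $i$ is equivalent to having entered by time $i$. Your hitting-time argument via $\tau$ --- in particular using $\widetilde{\Omega}\subseteq\widetilde{\mathcal{X}}$ to rule out leaving $\mathcal{X}$ without landing in $\widetilde{\mathcal{X}}\setminus\mathcal{X}$ --- simply makes rigorous what the paper states as an observation.
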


\subsection{Sufficient Conditions for Upper Bounds}
\label{sub:sclb1}

In this subsection, we present our sufficient condition for upper bounding the probability of exiting the safe set $\mathcal{X}$ over the time horizon $[0,N]$ in Problem \ref{safe}.

The sufficient condition requires the existence of a non-negative function $v(\bm{x}): \widetilde{\mathcal{X}}\rightarrow \mathbb{R}$ defined on the set $\widetilde{\mathcal{X}}$ that satisfies two key properties: (1) it must be greater than or equal to one on the subset $\widetilde{\mathcal{X}}\setminus \mathcal{X}$, and (2) its $\alpha$-scaled expected value at the next step along the system dynamics described in \eqref{system} should not increase by more than $\alpha \beta$ relative to its current value for $\bm{x}\in \mathcal{X}$, where $\alpha\in (0,1]$ and $\beta\in (-\infty,1]$. The upper bounds derivation proceeds as follows: Lemma \ref{lema1} establishes that the probability of system \eqref{sdss1} entering $\widetilde{\mathcal{X}}\setminus \mathcal{X}$ at time $t=N$ is equal to the probability of system \eqref{system} entering this set up to time $t=N$. Then, we derive upper bounds using system \eqref{sdss1} by analyzing $\alpha$ and $\beta$. Relying on the fact that the system in \eqref{sdss1} remains stationary when starting from $\widetilde{\mathcal{X}}\setminus \mathcal{X}$, which implies $\mathbb{E}^{\infty}[v(\widetilde{\bm{\phi}}_{\pi}^{\bm{x}}(1))]=v(\bm{x})$ for $\bm{x}\in \widetilde{\mathcal{X}}\setminus \mathcal{X}$, we reformulate the constraint over the function $v(\bm{x}): \widetilde{\mathcal{X}}\rightarrow \mathbb{R}$ using system \eqref{sdss1}, requiring that the $\alpha$-scaled expected value of $v(\bm{x})$ at the next step along the system \eqref{sdss1} grows by at most $\alpha \beta$ relative to its current value $v(\bm{x})$ over the whole set $\widetilde{\mathcal{X}}$ when $\alpha\in (0,1]$ and $\beta \alpha-(\alpha-1) \in [0,1]$.  Upper bounds follow from recursive application of this reformulated constraint; when $\alpha\in (0,1]$ and $\beta \alpha-(\alpha-1) \in (-\infty,0)$, the constraint is modified to require that the expected value of $v(\bm{x})$ at the next step along the system \eqref{sdss1} grows by at most $\beta$, relative to the $(1 - \beta)$-weighted current value $v(\bm{x})$. The upper bound again emerges through recursive application of this reformulated constraint.

\begin{theorem}
    \label{pro_e2}
    If there exist a function $v(\bm{x})\colon\widetilde{\mathcal{X}}\rightarrow \mathbb{R}$, and $\alpha \in (0,1]$ and $\beta \in (-\infty,1]$ such that 
    \begin{equation}
   \label{ebr_2}
   \begin{cases}
   \mathbb{E}^{\infty}[v(\bm{\phi}_{\pi}^{\bm{x}}(1))]\leq \frac{v(\bm{x})}{\alpha} +\beta, & \forall \bm{x}\in \mathcal{X},\\
   v(\bm{x})\geq 1, & \forall \bm{x}\in  \widetilde{\mathcal{X}}\setminus \mathcal{X},\\
   v(\bm{x})\geq 0, &\forall \bm{x}\in \mathcal{X},
   \end{cases}
   \end{equation}
    then $\mathbb{P}^{\infty}\big(\exists k\in \mathbb{N}_{\leq N}. \bm{\phi}^{\bm{x}_0}_{\pi}(k) \in \widetilde{\mathcal{X}}\setminus \mathcal{X} \mid  \bm{x}_0\in \mathcal{X} \big)=\mathbb{P}^{\infty}\big( \widetilde{\bm{\phi}}^{\bm{x}_0}_{\pi}(N) \in \widetilde{\mathcal{X}}\setminus \mathcal{X} \mid  \bm{x}_0\in \mathcal{X} \big)\leq$
   \[
    \begin{cases}
       v(\bm{x}_0)+\beta N, &\text{if~}\alpha=1 \wedge \gamma \in [0,1],\\
        v(\bm{x}_0) \alpha^{-N}+\frac{(1-\alpha^{-N})\alpha \beta}{\alpha-1}, & \text{if~}\alpha\in (0,1)\wedge \gamma\in [0,1],\\
       1-(1- v(\bm{x}_0))(1-\beta)^{N},                &\text{if~}\alpha\in (0,1] \wedge \gamma \in (-\infty, 0),
 \end{cases}
 \] 
 where $\gamma=\beta \alpha -(\alpha-1)$.  
    \end{theorem}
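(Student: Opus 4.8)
The plan is to transfer everything onto the switched system \eqref{sdss1}, for which, by Lemma \ref{lema1}, the probability on the left equals $p_N := \mathbb{P}^{\infty}\big(\widetilde{\bm\phi}_\pi^{\bm x_0}(N) \in \widetilde{\mathcal X}\setminus\mathcal X \mid \bm x_0 \in \mathcal X\big)$. Writing $V_k := v(\widetilde{\bm\phi}_\pi^{\bm x_0}(k))$ and using that $\widetilde{\mathcal X}$ is a robust invariant (so $\widetilde{\bm\phi}_\pi^{\bm x_0}(k)\in\widetilde{\mathcal X}$ for all $k$), the last two inequalities in \eqref{ebr_2} give $1_{\widetilde{\mathcal X}\setminus\mathcal X}(\bm x)\le v(\bm x)$ on $\widetilde{\mathcal X}$, and hence the Markov-type bound $p_N = \mathbb{E}^{\infty}[1_{\widetilde{\mathcal X}\setminus\mathcal X}(\widetilde{\bm\phi}_\pi^{\bm x_0}(N))]\le \mathbb{E}^{\infty}[V_N]$. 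The whole argument then reduces to controlling the scalar sequence $a_k := \mathbb{E}^{\infty}[V_k]$, and the three displayed bounds correspond exactly to three regimes of the resulting recurrence.

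The engine is a one-step conditional estimate. With $\mathcal F_k := \sigma(\bm\theta(0),\dots,\bm\theta(k-1))$, the i.i.d.\ assumption and the $\mathcal F_k$-measurability of $\widetilde{\bm\phi}_\pi^{\bm x_0}(k)$ give $\mathbb{E}^{\infty}[V_{k+1}\mid\mathcal F_k]=\mathbb{E}^{\infty}[v(\widetilde{\bm f}(\widetilde{\bm\phi}_\pi^{\bm x_0}(k),\bm\theta(k)))\mid\mathcal F_k]$. On $\{\widetilde{\bm\phi}_\pi^{\bm x_0}(k)\in\mathcal X\}$ the switched map coincides with $\bm f$, so the first line of \eqref{ebr_2} yields $\mathbb{E}^{\infty}[V_{k+1}\mid\mathcal F_k]\le V_k/\alpha+\beta$; on $\{\widetilde{\bm\phi}_\pi^{\bm x_0}(k)\in\widetilde{\mathcal X}\setminus\mathcal X\}$ the state is frozen, so $\mathbb{E}^{\infty}[V_{k+1}\mid\mathcal F_k]=V_k$.

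For the first two bounds ($\gamma\ge 0$) I would show that the safe-set estimate in fact holds everywhere: on the frozen region $V_k\ge 1$, and there $V_k\le V_k/\alpha+\beta$ is equivalent to $V_k(1-\alpha)/\alpha\ge-\beta$, whose worst case $V_k=1$ is precisely $\gamma=\beta\alpha-(\alpha-1)\ge 0$. Hence $a_{k+1}\le a_k/\alpha+\beta$ unconditionally, and unrolling this linear recurrence from $a_0=v(\bm x_0)$ — trivially when $\alpha=1$, and via the geometric sum $\sum_{j=0}^{N-1}\alpha^{-j}=\alpha(\alpha^{-N}-1)/(1-\alpha)$ when $\alpha\in(0,1)$ — reproduces the first and second expressions once combined with $p_N\le a_N$.

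The third bound ($\gamma<0$, which forces $\beta<0$ and $1-\beta>1$) requires a different, multiplicative argument, and this is the part I expect to be the main obstacle, since the uniform one-step estimate above fails exactly when $\gamma<0$. Here I would pass to $L_k:=1-V_k$ and verify the reversed one-step inequality $\mathbb{E}^{\infty}[L_{k+1}\mid\mathcal F_k]\ge(1-\beta)L_k$: on the safe set a direct computation gives $\mathbb{E}^{\infty}[L_{k+1}\mid\mathcal F_k]-(1-\beta)L_k\ge -V_k\gamma/\alpha\ge 0$ because $\gamma<0$ and $V_k\ge 0$, while on the frozen region $L_k=1-V_k\le 0$ makes $L_k\ge(1-\beta)L_k$ immediate. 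Consequently $Z_k:=(1-\beta)^{-k}L_k$ is a submartingale, so $\mathbb{E}^{\infty}[Z_N]\ge Z_0=1-v(\bm x_0)$, i.e.\ $1-\mathbb{E}^{\infty}[V_N]\ge(1-v(\bm x_0))(1-\beta)^N$; together with $p_N\le\mathbb{E}^{\infty}[V_N]$ this yields $p_N\le 1-(1-v(\bm x_0))(1-\beta)^N$. The remaining work is book-keeping: checking the sign of each term at the switch-over value $V_k=1$, and confirming that the three regimes tile the admissible range, which holds because $\beta\le 1$ forces $\gamma\le 1$, so $\gamma\in(-\infty,0)$ and $\gamma\in[0,1]$ exhaust all admissible $(\alpha,\beta)$.
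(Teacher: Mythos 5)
Your proposal is correct and follows essentially the same route as the paper's own proof: transfer to the switched system via Lemma \ref{lema1}, extend the one-step drift inequality across the frozen region (where $v\geq 1$ and the sign of $\gamma$ makes the worst case $v=1$ work out, exactly as you argue), bound the indicator $1_{\widetilde{\mathcal{X}}\setminus\mathcal{X}}$ by $v$, and unroll the resulting affine recurrence on $\mathbb{E}^{\infty}[V_k]$. Your submartingale treatment of the case $\gamma<0$ via $Z_k=(1-\beta)^{-k}(1-V_k)$ is just the paper's substitution of $1-\beta$ for $1/\alpha$ (valid since $v\geq 0$ and $1/\alpha<1-\beta$) with the fixed point of the affine map subtracted off, so the two arguments coincide step for step.
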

    \begin{proof}
    1) We first prove the case of $\alpha\in (0,1)\wedge \gamma\in [0,1]$.

Since $\mathbb{E}^{\infty}[v(\widetilde{\bm{\phi}}_{\pi}^{\bm{x}}(1))]=v(\bm{x})$ for $\bm{x}\in \widetilde{\mathcal{X}}\setminus \mathcal{X}$, we can obtain 
\[\mathbb{E}^{\infty}[v(\widetilde{\bm{\phi}}_{\pi}^{\bm{x}}(1))]-\frac{1}{\alpha}v(\bm{x})=(1-\frac{1}{\alpha}) v(\bm{x}), \forall \bm{x}\in \widetilde{\mathcal{X}}\setminus \mathcal{X}.\]
Therefore, \[\mathbb{E}^{\infty}[v(\widetilde{\bm{\phi}}_{\pi}^{\bm{x}}(1))]-\frac{1}{\alpha}v(\bm{x})\leq \beta, \forall \bm{x}\in \widetilde{\mathcal{X}}\setminus \mathcal{X},\] which can be justified based on the facts that $v(\bm{x})\geq 1$ for $\bm{x}\in \widetilde{\mathcal{X}}\setminus \mathcal{X}$, $\alpha\in (0,1)$ and $\gamma \in [0,1]$. Therefore, if $v(\bm{x})$ satisfies \eqref{ebr_2}, it will satisfy
   \begin{equation}
   \label{ebs_2}
   \begin{cases}
   \mathbb{E}^{\infty}[v(\widetilde{\bm{\phi}}_{\pi}^{\bm{x}}(1))]\leq \frac{v(\bm{x})}{\alpha} +\beta, & \forall \bm{x}\in \widetilde{\mathcal{X}},\\
   v(\bm{x})\geq 1_{\widetilde{\mathcal{X}}\setminus \mathcal{X}}(\bm{x}), & \forall \bm{x}\in \widetilde{\mathcal{X}}.
   \end{cases}
   \end{equation}  
    According to \eqref{ebs_2}, we have  
    \[
    \begin{split}
    &v(\bm{x}_0)\geq 1_{\widetilde{\mathcal{X}}\setminus \mathcal{X}}(\bm{x}_0),\\
    &\alpha^{-1}v(\bm{x}_0)+\beta\geq \mathbb{E}^{\infty}[v(\widetilde{\bm{\phi}}_{\pi}^{\bm{x}_0}(1))]\geq \mathbb{E}^{\infty}[1_{\widetilde{\mathcal{X}}\setminus \mathcal{X}}(\widetilde{\bm{\phi}}_{\pi}^{\bm{x}_0}(1))],\\  
    &\ldots,\\
    &\alpha^{-N}v(\bm{x}_0)+\beta \sum_{i=0}^{N-1} \alpha^{-i} \geq  \mathbb{E}^{\infty}[v(\widetilde{\bm{\phi}}_{\pi}^{\bm{x}_0}(N))]\geq \mathbb{E}^{\infty}[1_{\widetilde{\mathcal{X}}\setminus \mathcal{X}}(\widetilde{\bm{\phi}}_{\pi}^{\bm{x}_0}(N))].
    \end{split}
    \]
    Therefore, 
 $\mathbb{P}^{\infty}\big(\widetilde{\bm{\phi}}^{\bm{x}_0}_{\pi}(N) \in \widetilde{\mathcal{X}}\setminus \mathcal{X} \mid  \bm{x}_0\in \mathcal{X} \big)=\mathbb{E}^{\infty}[1_{\widetilde{\mathcal{X}}\setminus \mathcal{X}}(\widetilde{\bm{\phi}}_{\pi}^{\bm{x}_0}(N))]\leq \alpha^{-N}v(\bm{x}_0)+\alpha \beta \frac{(1-\alpha^{-N})}{\alpha-1}$. According to Lemma \ref{lema1}, we have the conclusion.

2)  The conclusion for the case of $\alpha=1\wedge \gamma\in [0,1]$ can be justified via following the proof of the above one.

3) We will show the case of $\alpha\in (0,1]\wedge \gamma\in (-\infty,0)$.

Since $\beta <1-\frac{1}{\alpha}$, thus $\frac{1}{\alpha}<1-\beta$. Consequently, $1-\beta>0$.  Thus, we conclude that if $v(\bm{x})$ satisfies \eqref{ebr_2}, it will satisfy
  \begin{equation*}
   \begin{cases}
   \mathbb{E}^{\infty}[v(\bm{\phi}_{\pi}^{\bm{x}}(1))]\leq v(\bm{x})(1-\beta) +\beta,& \forall \bm{x}\in \mathcal{X},\\
   v(\bm{x})\geq 1, & \forall \bm{x}\in  \widetilde{\mathcal{X}}\setminus \mathcal{X},\\
   v(\bm{x})\geq 0, &\forall \bm{x}\in \mathcal{X},
   \end{cases}
   \end{equation*}
and thus it satisfies
\begin{equation*}
   \begin{cases}
   \mathbb{E}^{\infty}[v(\widetilde{\bm{\phi}}_{\pi}^{\bm{x}}(1))]\leq (1-\beta)v(\bm{x}) +\beta, & \forall \bm{x}\in \widetilde{\mathcal{X}},\\
   v(\bm{x})\geq 1_{\widetilde{\mathcal{X}}\setminus \mathcal{X}}(\bm{x}), & \forall \bm{x}\in \widetilde{\mathcal{X}}.
   \end{cases}
   \end{equation*}
Therefore, following the proof for the case $\alpha\in (0,1]\wedge \gamma\in [0,1]$, we will have the conclusion.

The proof is completed.
   \end{proof}

   In Theorem \ref{pro_e2}, the parameter $\beta$ is practically constrained to be less than or equal to one. This restriction is necessary to ensure the practical utility of the upper bounds $v(\bm{x}_0)+\beta N$ and $v(\bm{x}_0) \alpha^{-N}+\frac{(1-\alpha^{-N})\alpha \beta}{\alpha-1}$. Assigning a value greater than one to $\beta$ would lead to these two bounds exceeding unity when $N \geq 1$, rendering them ineffective.

If $\alpha=1$ and $\beta=0$, Theorem \ref{pro_e2} is equivalent to Proposition 3 in \cite{yu2023safe}, which formulates a sufficient condition for determining an upper bound of the probability of exiting the safe set $\mathcal{X}$ eventually (i.e., $\mathbb{P}^{\infty}\big(\exists k\in \mathbb{N}. \bm{\phi}^{\bm{x}_0}_{\pi}(k) \in \widetilde{\mathcal{X}}\setminus \mathcal{X} \mid  \bm{x}_0\in \mathcal{X} \big)$). The upper bound is $v(\bm{x}_0)$, which is also an upper bound of the exit probability $\mathbb{P}^{\infty}\big(\exists k\in \mathbb{N}_{\leq N}. \bm{\phi}^{\bm{x}_0}_{\pi}(k) \in \widetilde{\mathcal{X}}\setminus \mathcal{X} \mid  \bm{x}_0\in \mathcal{X} \big)$. 
On the other hand, when $\alpha\in (0,1)$ and $\gamma \in (-\infty,0)$, if $v(\bm{x}_0)<1$,  $\lim_{N\rightarrow +\infty}1-(1-v(\bm{x}_0))(1-\beta)^{N}=-\infty$ holds, implying  $\mathbb{P}^{\infty}\big(\exists k\in \mathbb{N}. \bm{\phi}^{\bm{x}_0}_{\pi}(k) \in \widetilde{\mathcal{X}}\setminus \mathcal{X} \mid  \bm{x}_0\in \mathcal{X} \big)=0$ and $\mathbb{P}^{\infty}\big(\exists k\in \mathbb{N}_{\leq N}. \bm{\phi}^{\bm{x}_0}_{\pi}(k) \in \widetilde{\mathcal{X}}\setminus \mathcal{X} \mid  \bm{x}_0\in \mathcal{X} \big)=0$. 

Theorem \ref{pro_e2} extends the conclusion drawn in \cite{kushner1967stochastic}, as it allows for a value of $\alpha$ within the interval $(0,1)$, in contrast to the requirement of $\alpha \geq  1$ in the cited work. For convenience of reference, we present the related result in \cite{kushner1967stochastic} here, which corresponds to Theorem 3 in Chapter 3 in \cite{kushner1967stochastic}.
\begin{proposition}
\label{theorem3}
    If there exists a continuous nonnegative function $v(\cdot): \widetilde{\mathcal{X}}\rightarrow \mathbb{R}$ satisfying
   \begin{equation}
   \label{ebr_33}
   \begin{cases}
   v(\bm{x}_0)<1,& \\
   \mathbb{E}^{\infty}[v(\bm{\phi}_{\pi}^{\bm{x}}(1))]\leq \frac{v(\bm{x})}{\alpha} +\beta, & \forall \bm{x}\in \mathcal{X},\\
   v(\bm{x})\geq 1, & \forall \bm{x}\in  \widetilde{\mathcal{X}}\setminus \mathcal{X},\\
   v(\bm{x})\geq 0, &\forall \bm{x}\in \mathcal{X},
   \end{cases}
   \end{equation}
where $\alpha\geq 1$ and $\beta \geq 0$, then $\mathbb{P}^{\infty}\big(\forall k\in \mathbb{N}_{\leq N}. \bm{\phi}^{\bm{x}_0}_{\pi}(k) \in \widetilde{\mathcal{X}}\setminus \mathcal{X} \mid  \bm{x}_0\in \mathcal{X} \big)\leq$
   \[
    \begin{cases}
        1-(1-v(\bm{x}_0))(1-\beta)^N, &\text{if~} \alpha>1\wedge \gamma \in (-\infty,0], \\
        v(\bm{x}_0) \alpha^{-N}+\frac{(1-\alpha^{-N})\alpha \beta}{\alpha-1},  & \text{if~}\alpha>1\wedge \gamma \in (0,\infty),\\
        v(\bm{x}_0)+\beta N, & \text{if~}\alpha=1 \wedge \gamma \in (0,\infty),
 \end{cases}
 \]
 where $\gamma=\alpha \beta-(\alpha-1)$.
 \end{proposition}

Since $\bm{x}_0\in \{\bm{x}\in \widetilde{\mathcal{X}}\mid v(\bm{x})<1\}$ and $\widetilde{\mathcal{X}}\setminus \mathcal{X}\subseteq \{\bm{x}\in \widetilde{\mathcal{X}}\mid v(\bm{x})\geq 1\}$, according to Theorem 3 in Chapter 3 in \cite{kushner1967stochastic}, the above statement holds. Similarly, the parameter $\beta$ in \eqref{ebr_33} should be less than or equal to one. When $\gamma\leq 0$ and $\alpha>1$, $\beta\leq 1-\frac{1}{\alpha}$ holds and thus $\beta\leq 1$ holds. For the case with $\gamma>0$, when $\beta>1$ and $N\geq 1$,  $v(\bm{x}_0) \alpha^{-N}+\frac{(1-\alpha^{-N})\alpha \beta}{\alpha-1}>1$ and $v(\bm{x}_0)+\beta N >1$ hold, which are ineffective upper bounds. 

When comparing constraints \eqref{ebr_2} and \eqref{ebr_33}, it becomes apparent that if there exists a function $v(\bm{x})$ along with $\alpha>1$ and $\beta\in [0,1]$ satisfying \eqref{ebr_33}, the same function $v(\bm{x})$ with $\frac{1}{\alpha}$ and $\beta\in [0,1]$ will satisfy \eqref{ebr_2}. In addition, it is observed that the barrier function in \cite{mathiesen2022safety}, referred to as c-martingales, is just a function that satisfies constraint \eqref{ebr_2} with $\alpha=1$ and $\beta\geq 0$.

\subsection{Sufficient Conditions for Lower Bounds}
\label{sub:scub1}

In this subsection, we present our sufficient condition for lower bounding the probability of exiting the safe set over the time horizon $[0,N]$ in Problem \ref{safe}, which is inspired by \cite{xue2021reach,xue2024}.

The sufficient condition requires the existence of a function $v(\bm{x})$ defined on the set $\widetilde{\mathcal{X}}$ that satisfies three key properties: (1) it admits a finite upper bound over $\widetilde{\mathcal{X}}$, (2) it is less than or equal to one on the subset $\widetilde{\mathcal{X}}\setminus \mathcal{X}$, and (3) its expected value at the next step along the system dynamics described in \eqref{system} increases over its $\alpha$-scaled current value by more than $\beta$ for $\bm{x}\in \mathcal{X}$, where $\alpha \in [1,\infty)$, and $\beta\in (1-\alpha,\infty)$. The lower bounds derivation proceeds as follows: Lemma \ref{lema1} establishes that the probability of system \eqref{sdss1} entering $\widetilde{\mathcal{X}}\setminus \mathcal{X}$ at time $t=N$ is equal to the probability of system \eqref{system} entering this set up to time $t=N$. Then, we derive the lower bounds using system \eqref{sdss1} by analyzing $\alpha$. Relying on the fact that the system in \eqref{sdss1} remains stationary when starting from $\widetilde{\mathcal{X}}\setminus \mathcal{X}$, which implies $\mathbb{E}^{\infty}[v(\widetilde{\bm{\phi}}_{\pi}^{\bm{x}}(1))]=v(\bm{x})$ for $\bm{x}\in \widetilde{\mathcal{X}}\setminus \mathcal{X}$, and noting that $v(\bm{x})$ is less than or equal to one on this subset, we reformulate the constraint on $v(\bm{x})$. Specifically, using system \eqref{sdss1}, we require that the expected value of $v(\bm{x})$ at the next step grows at least by $\beta-(\alpha-1+\beta)1_{\widetilde{\mathcal{X}}\setminus \mathcal{X}}(\bm{x})$  over its $\alpha$-scaled current value $v(\bm{x})$, for all $\bm{x}\in \widetilde{\mathcal{X}}$. For $\alpha>1$ and $\alpha=1$, the lower bounds are subsequently obtained through recursive application of the reformulated constraint, in conjunction with the upper bound on the function $v(\bm{x}): \widetilde{\mathcal{X}}\rightarrow \mathbb{R}$.

\begin{theorem}
    \label{pro_e7}
      If there exist a function $v(\bm{x})\colon\widetilde{\mathcal{X}}\rightarrow \mathbb{R}$ with $\sup_{\bm{x}\in \widetilde{\mathcal{X}}}v(\bm{x})\leq M$, $\alpha \in [1,\infty)$, and $\beta\in (1-\alpha,\infty)$ such that 
      \begin{equation}
   \label{ebr_e7}
   \begin{cases}
   \beta+\alpha v(\bm{x})\leq \mathbb{E}^{\infty}[v(\bm{\phi}_{\pi}^{\bm{x}}(1))], & \forall \bm{x}\in \mathcal{X},\\
   v(\bm{x})\leq 1, &\forall \bm{x}\in \widetilde{\mathcal{X}}\setminus \mathcal{X},
   \end{cases}
   \end{equation}
then $\mathbb{P}^{\infty}\big(\exists k\in \mathbb{N}_{\leq N}. \bm{\phi}^{\bm{x}_0}_{\pi}(k) \in \widetilde{\mathcal{X}}\setminus \mathcal{X} \mid  \bm{x}_0\in \mathcal{X} \big)=\mathbb{P}^{\infty}\big(\widetilde{\bm{\phi}}^{\bm{x}_0}_{\pi}(N) \in \widetilde{\mathcal{X}}\setminus \mathcal{X} \mid  \bm{x}_0\in \mathcal{X} \big)\geq$ \[
   \begin{cases}
   \frac{(\alpha^{N+1}v(\bm{x}_0)-M)(\alpha-1)+\beta(\alpha^{N+1}-1)}{(\alpha+\beta-1)(\alpha^{N+1}-1)}, & \text{if~}\alpha>1,\\
   1+\frac{v(\bm{x}_0)-M}{\beta(N+1)}, &  \text{if~}\alpha=1. 
   \end{cases}
   \]
    \end{theorem}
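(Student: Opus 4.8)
The plan is to mirror the telescoping argument used for Theorem \ref{pro_e2}, but to run it in the opposite direction so that the expected value of $v$ is forced to \emph{grow}, and then to convert a lower bound on this expected value into a lower bound on the exit probability. Throughout I work with the frozen (switched) system \eqref{sdss1}, write $V_n:=\mathbb{E}^{\infty}[v(\widetilde{\bm{\phi}}_{\pi}^{\bm{x}_0}(n))]$ and $p_n:=\mathbb{P}^{\infty}(\widetilde{\bm{\phi}}_{\pi}^{\bm{x}_0}(n)\in\widetilde{\mathcal{X}}\setminus\mathcal{X})$, and appeal to Lemma \ref{lema1} at the very end to pass from $p_N$ back to the quantity in Problem \ref{safe}.

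First I would establish a one-step recursion for $V_n$. By time-homogeneity and the Markov property, $V_{n+1}=\mathbb{E}^{\infty}[g(\widetilde{\bm{\phi}}_{\pi}^{\bm{x}_0}(n))]$ with $g(\bm{x})=\mathbb{E}^{\infty}[v(\widetilde{\bm{\phi}}_{\pi}^{\bm{x}}(1))]$. Splitting the expectation according to whether the state lies in $\mathcal{X}$ or in $\widetilde{\mathcal{X}}\setminus\mathcal{X}$, and using that on $\mathcal{X}$ the switched dynamics coincide with \eqref{system} so $g(\bm{x})\ge\beta+\alpha v(\bm{x})$, while on $\widetilde{\mathcal{X}}\setminus\mathcal{X}$ the dynamics are frozen so $g(\bm{x})=v(\bm{x})$, I obtain $V_{n+1}\ge \beta q_n+\alpha A_n+B_n$, where $q_n=1-p_n$ and $A_n$, $B_n$ denote the contributions of $v$ from inside and outside $\mathcal{X}$. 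Writing $A_n+B_n=V_n$ this becomes $V_{n+1}\ge \beta q_n+\alpha V_n+(1-\alpha)B_n$. Since $v\le 1$ on $\widetilde{\mathcal{X}}\setminus\mathcal{X}$ we have $B_n\le p_n$, and because $1-\alpha\le 0$ this yields the clean affine recursion $V_{n+1}\ge \alpha V_n+\beta+(1-\alpha-\beta)p_n$.

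The decisive step is to freeze the coefficient of $p_n$. Since $\beta>1-\alpha$ the number $1-\alpha-\beta$ is strictly negative, and Lemma \ref{lema1} gives $p_n\le p_N$ for $n\le N$, so $(1-\alpha-\beta)p_n\ge(1-\alpha-\beta)p_N$ and hence $V_{n+1}\ge \alpha V_n+c$ with the constant $c=\beta+(1-\alpha-\beta)p_N$. Iterating this from $n=0$ to $n=N$ --- one step further than one might expect --- bounds $V_{N+1}$ from below by $\alpha^{N+1}v(\bm{x}_0)+c\,\tfrac{\alpha^{N+1}-1}{\alpha-1}$ when $\alpha>1$ (and by $v(\bm{x}_0)+c(N+1)$ when $\alpha=1$). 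Combining this with the trivial upper bound $V_{N+1}\le M$ coming from $\sup_{\widetilde{\mathcal{X}}}v\le M$ gives a single inequality in the unknown $p_N$; solving it --- and carefully flipping the inequality because the coefficient of $p_N$ is negative --- produces exactly the two stated bounds, after which Lemma \ref{lema1} finishes the proof.

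The main obstacle I anticipate is the bookkeeping of inequality directions: several coefficients ($1-\alpha$, $1-\alpha-\beta$, and the final coefficient of $p_N$) are non-positive, so every use of $B_n\le p_n$, of the monotonicity $p_n\le p_N$, and of the final division must flip correctly; getting any one of them wrong collapses the estimate. The off-by-one --- iterating to $V_{N+1}$ and closing with the bound $V_{N+1}\le M$ rather than with a $p$-dependent bound at step $N$ --- is what produces the $\alpha^{N+1}$ and the $N+1$ appearing in the statement, and I would treat it as the crux rather than as routine algebra.
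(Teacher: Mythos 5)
Your proof is correct and follows essentially the same route as the paper's: the same one-step inequality $V_{n+1}\geq \alpha V_n+\beta+(1-\alpha-\beta)p_n$ (the paper derives it as a pointwise inequality on $\widetilde{\mathcal{X}}$ and then takes expectations, you split the expectation over $\mathcal{X}$ and $\widetilde{\mathcal{X}}\setminus\mathcal{X}$, which is the same computation), the same use of Lemma \ref{lema1} together with $1-\alpha-\beta<0$ to replace $p_n$ by $p_N$, the same telescoping to step $N+1$ closed off with $V_{N+1}\leq M$, and the same sign-flipping division to isolate $p_N$.
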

           \begin{proof}
           1) We first prove the case of $\alpha>1$. 
           
Since  $\mathbb{E}^{\infty}[v(\widetilde{\bm{\phi}}_{\pi}^{\bm{x}}(1))]=v(\bm{x})$ for $\bm{x}\in \widetilde{\mathcal{X}}\setminus \mathcal{X}$,  we conclude that if $v(\bm{x})$ satisfies \eqref{ebr_e7},  it will satisfy 
   \begin{equation*}
    (\alpha-1+\beta)1_{\widetilde{\mathcal{X}}\setminus \mathcal{X}}(\bm{x})+ \mathbb{E}^{\infty}[v(\widetilde{\bm{\phi}}_{\pi}^{\bm{x}}(1))]\geq \alpha v(\bm{x})+\beta,\forall \bm{x}\in \widetilde{\mathcal{X}}.
   \end{equation*}
   
Consequently, for $\bm{x}_0\in \mathcal{X}$, we have 
\begin{equation*}
    \begin{split}
    &\mathbb{E}^{\infty}[v(\widetilde{\bm{\phi}}_{\pi}^{\bm{x}_0}(1))]-\alpha v(\bm{x}_0)\geq \beta+(1-\alpha-\beta)1_{\widetilde{\mathcal{X}}\setminus \mathcal{X}}(\bm{x}_0),\\
    &\mathbb{E}^{\infty}[v(\widetilde{\bm{\phi}}_{\pi}^{\bm{x}_0}(2))]-\alpha\mathbb{E}^{\infty}[v(\widetilde{\bm{\phi}}_{\pi}^{\bm{x}_0}(1))]\geq \beta+(1-\alpha-\beta)\mathbb{E}^{\infty}[1_{\widetilde{\mathcal{X}}\setminus \mathcal{X}}(\widetilde{\bm{\phi}}_{\pi}^{\bm{x}_0}(1))],\\
    &\ldots,\\
    &\mathbb{E}^{\infty}[v(\widetilde{\bm{\phi}}_{\pi}^{\bm{x}_0}(N+1))]-\alpha\mathbb{E}^{\infty}[v(\widetilde{\bm{\phi}}_{\pi}^{\bm{x}_0}(N))]\geq \beta+(1-\alpha-\beta)\mathbb{E}^{\infty}[1_{\widetilde{\mathcal{X}}\setminus \mathcal{X}}(\widetilde{\bm{\phi}}_{\pi}^{\bm{x}_0}(N))].
    \end{split}
\end{equation*}
Thus, we obtain 
\[
\begin{split}
&\mathbb{E}^{\infty}[v(\widetilde{\bm{\phi}}_{\pi}^{\bm{x}_0}(N+1))]-\alpha^{N+1}v(\bm{x}_0)\\
&\geq \beta \sum_{i=0}^N \alpha^i+(1-\alpha-\beta) \sum_{i=0}^N \alpha^{N-i}\mathbb{E}^{\infty}[1_{\widetilde{\mathcal{X}}\setminus \mathcal{X}}(\widetilde{\bm{\phi}}_{\pi}^{\bm{x}_0}(i))]\\
&\geq \beta \frac{1-\alpha^{N+1}}{1-\alpha}+(1-\alpha-\beta) \frac{1-\alpha^{N+1}}{1-\alpha}\times \mathbb{P}^{\infty}\big(\widetilde{\bm{\phi}}^{\bm{x}_0}_{\pi}(N) \in \widetilde{\mathcal{X}}\setminus \mathcal{X} \mid  \bm{x}_0\in \mathcal{X} \big)
\end{split}.\]
The last inequality is obtained according to Lemma \ref{lema1}, which states $\mathbb{P}^{\infty}\big(\widetilde{\bm{\phi}}^{\bm{x}_0}_{\pi}(N) \in \widetilde{\mathcal{X}}\setminus \mathcal{X} \mid  \bm{x}_0\in \mathcal{X} \big)\geq \mathbb{P}^{\infty}\big(\widetilde{\bm{\phi}}^{\bm{x}_0}_{\pi}(i) \in \widetilde{\mathcal{X}}\setminus \mathcal{X} \mid  \bm{x}_0\in \mathcal{X} \big)$ for $i\leq N$, and the fact that $1-\alpha-\beta<0$.

Consequently,
\[
\begin{split}
&\mathbb{P}^{\infty}\big(\widetilde{\bm{\phi}}^{\bm{x}_0}_{\pi}(N) \in \widetilde{\mathcal{X}}\setminus \mathcal{X} \mid  \bm{x}_0\in \mathcal{X} \big)\geq \frac{(\alpha^{N+1}v(\bm{x}_0)-M)(\alpha-1)+\beta(\alpha^{N+1}-1)}{(\alpha+\beta-1)(\alpha^{N+1}-1)}
\end{split}.
\]

2) The conclusion for the case of $\alpha=1$ can be justified via
following the proof of the above one.

The proof is completed.
\end{proof}

It is worth noting here that if there exists a bounded function $v(\bm{x})\colon \widetilde{\mathcal{X}}\rightarrow \mathbb{R}$ satisfying \eqref{ebr_e7} with  $\alpha=1$, then system \eqref{system}, starting from $\bm{x}_0\in \mathcal{X}$, will exit the safe set $\mathcal{X}$ eventually with the probability of one, since $\lim_{N\rightarrow +\infty} 1+\frac{v(\bm{x}_0)-M}{\beta(N+1)}=1$. Therefore, if system \eqref{system} does not feature this property, $\alpha=1$ cannot be used to perform computations.  

\begin{remark}
    When $\alpha=1$, the constraint $v(\bm{x})\leq 1, \forall \bm{x}\in \widetilde{\mathcal{X}}\setminus \mathcal{X}$ in \eqref{ebr_e7} is redundant and can be removed.  
\end{remark}

\begin{remark}
One may wonder whether a sufficient condition for lower bounding the probability of exiting the safe set $\mathcal{X}$ over the time horizon $[0,N]$ in Problem \ref{safe} can be constructed by directly reversing the sign in constraint \eqref{ebr_2} in Theorem \ref{pro_e2}. We will give a brief explanation here that the condition constructed in this manner will consistently yield zero as lower bounds.  

    If there exists a function $v(\bm{x})\colon\widetilde{\mathcal{X}}\rightarrow \mathbb{R}$ such that
    \begin{equation}
   \label{ebr_2lower}
   \begin{cases}
   \mathbb{E}^{\infty}[v(\bm{\phi}_{\pi}^{\bm{x}}(1))]\geq \frac{v(\bm{x})}{\alpha} +\beta, & \forall \bm{x}\in \mathcal{X},\\
   v(\bm{x})\leq 1, & \forall \bm{x}\in  \widetilde{\mathcal{X}}\setminus \mathcal{X},\\
   v(\bm{x})\leq 0, &\forall \bm{x}\in \mathcal{X},   
   \end{cases}
   \end{equation}
 then $\mathbb{P}^{\infty}\big(\exists k\in \mathbb{N}_{\leq N}. \bm{\phi}^{\bm{x}_0}_{\pi}(k) \in \widetilde{\mathcal{X}}\setminus \mathcal{X} \mid  \bm{x}_0\in \mathcal{X} \big)=\mathbb{P}^{\infty}\big( \widetilde{\bm{\phi}}^{\bm{x}_0}_{\pi}(N) \in \widetilde{\mathcal{X}}\setminus \mathcal{X} \mid  \bm{x}_0\in \mathcal{X} \big)\geq$
   \[
    \begin{cases}
       v(\bm{x}_0)+\beta N, &\text{if~}\alpha=1 \wedge \gamma \in (-\infty, 0),\\
       v(\bm{x}_0) \alpha^{-N}+\frac{(1-\alpha^{-N})\alpha \beta}{\alpha-1}, & \text{if~}\alpha\in (0,1)\wedge \gamma \in (-\infty, 0),\\
       1-(1-v(\bm{x}_0))\alpha^{-N},                &\text{if~}\alpha\in (0,1] \wedge \gamma\in [0,\infty),
 \end{cases}
 \] 
 where $\gamma=\beta \alpha -(\alpha-1)$.  This conclusion can be justified by following the proof of Theorem \ref{pro_e2}. However, via $\gamma=\beta \alpha -(\alpha-1)<0$ and $\alpha \in (0,1]$, we have $\beta<0$. Also, 
 since $v(\bm{x})\leq 0$ for $\bm{x}\in \mathcal{X}$, $v(\bm{x}_0)\leq 0$ holds. Thus,  $v(\bm{x}_0)+\beta N\leq 0$, $v(\bm{x}_0) \alpha^{-N}+\frac{(1-\alpha^{-N})\alpha \beta}{\alpha-1}\leq 0$, and $1-(1-v(\bm{x}_0))\alpha^{-N}\leq 0$ hold. 

\end{remark}

\section{Finite-time Reach-avoid Verification}
\label{sec:comp1}
In this section, we present our sufficient conditions for characterizing upper and lower bounds of the reach-avoid probability in Problem \ref{ravoid}. The sufficient condition for upper bounds is formulated in Subsection \ref{sub:ubfr} and the one for lower bounds is introduced in Subsection \ref{sub:lbfr}.

Similar to \cite{xue2021reach}, we define a switched system, which facilitates the transformation of the reach-avoid problem in Problem \ref{ravoid} to a mere reachability problem. The switched system is constructed by freezing the dynamics of system \eqref{system} upon either exiting the safe set $\mathcal{X}$ or reaching the target set $\mathcal{X}_r$. 

\begin{definition}
\label{switch}
The switched stochastic discrete-time system, which is built upon system \eqref{system}, is a quadruple $(\widehat{\mathcal{L}},\widehat{\mathcal{X}},\bm{x}_0, \widehat{\bm{f}})$ with the following components:
\begin{enumerate}
\item[-] $\widehat{\mathcal{L}}=\{1,2,3\}$ is a set of three locations;
\item[-] $\widehat{\mathcal{X}}\subseteq \mathbb{R}^n$ is the state constraint set;
\item[-] $\bm{x}_0\in \widehat{\mathcal{X}}$ is the initial state;
\item[-] $\widehat{\bm{f}}(\cdot,\cdot)\colon\mathbb{R}^n \times \Theta \rightarrow \mathbb{R}^n$, where \[\widehat{\bm{f}}(\bm{x},\bm{\theta})=\sum_{i=1}^3 1_{\widehat{\mathcal{X}}_i}(\bm{x})\widehat{\bm{f}}_i(\bm{x},\bm{\theta})\] with $\widehat{\bm{f}}_1(\bm{x},\bm{\theta})=\bm{f}(\bm{x},\bm{\theta})$, $\widehat{\bm{f}}_2(\bm{x},\bm{\theta})=\bm{x}$ and  $\widehat{\bm{f}}_3(\bm{x},\bm{\theta})=\bm{x}$, and $1_{\widehat{\mathcal{X}}_i}(\bm{x})$ is the indicator function of the set $\widehat{\mathcal{X}}_i$, i.e., $1_{\widehat{\mathcal{X}}_i}(\bm{x})=1$ if $\bm{x}\in \widehat{\mathcal{X}}_i$; otherwise, $1_{\widehat{\mathcal{X}}_i}(\bm{x})=0$,
\end{enumerate}
where 
\begin{enumerate}
\item $\widehat{\mathcal{X}}$ is a set satisfying $\widehat{\Omega}\subset\widehat{\mathcal{X}}$, where
\[\widehat{\Omega}=\{\bm{x}'\in \mathbb{R}^n\mid \bm{x}'=\bm{f}(\bm{x},\bm{\theta}), \bm{x}\in \mathcal{X}, \bm{\theta}\in \Theta\}\cup \mathcal{X};\]
\item $\widehat{\mathcal{X}}_1=\mathcal{X}\setminus \mathcal{X}_r$;
\item $\widehat{\mathcal{X}}_2=\mathcal{X}_r$;
\item $\widehat{\mathcal{X}}_3=\widehat{\mathcal{X}}\setminus \mathcal{X}$.
\end{enumerate}
The evolution of the state of this switched system is governed by the iterative map
\begin{equation}
\label{sdss2}
\begin{split}
&\bm{x}(l+1)=\widehat{\bm{f}}(\bm{x}(l),\bm{\theta}(l)), \forall l\in \mathbb{N},\\
&\bm{x}(0)=\bm{x}_0\in \widehat{\mathcal{X}}.
\end{split}
\end{equation}
\end{definition}

The trajectory of system \eqref{sdss2}, induced by initial state $\bm{x}_0\in \widehat{\mathcal{X}}$ and disturbance policy $\pi$, is denoted by 
$\widehat{\bm{\phi}}_{\pi}^{\bm{x}_0}(\cdot)\colon \mathbb{N} \rightarrow \mathbb{R}^n$.

  When system \eqref{system}, initialized at $\bm{x}_0 \in \mathcal{X}\setminus \mathcal{X}_r$, enters the region $\mathcal{X}_r$ for the first time at time $i\in \mathbb{N}$ without leaving the safe set $\mathcal{X}$ beforehand, system \eqref{sdss2} starting from $\bm{x}_0 \in \mathcal{X}\setminus \mathcal{X}_r$ also experiences its initial entry at this time, and vice versa. Further, when entering $\mathcal{X}_r$ and $\widehat{\mathcal{X}}\setminus \mathcal{X}$, system \eqref{sdss2} will remain confined indefinitely. Thus, the set $\widehat{\mathcal{X}}$ is a robust invariant for system \eqref{sdss2} \cite{xue2021reach}, that is, trajectories of system \eqref{sdss2} originating from the set $\widehat{\mathcal{X}}$ will never leave it. Consequently, the probability of system \eqref{sdss2} entering $\mathcal{X}_r$ at time $t=i$ aligns with the cumulative probability of system \eqref{system} entering this target set up to time $i$ without exiting $\mathcal{X}$ until the target hitting event occurs.

\begin{lemma}
\label{lemma2}
For $i\in \mathbb{N}$, $\mathbb{P}^{\infty}(\exists k\in \mathbb{N}_{\leq i}. \bm{\phi}_{\pi}^{\bm{x}_0}(k)\in \mathcal{X}_r
\bigwedge \forall l\in \mathbb{N}_{\leq k}. \bm{\phi}_{\pi}^{\bm{x}_0}(l)\in \mathcal{X} \mid \bm{x}_0 \in \mathcal{X}\setminus \mathcal{X}_r)=\mathbb{P}^{\infty}(\widehat{\bm{\phi}}_{\pi}^{\bm{x}_0}(i)\in \mathcal{X}_r \mid \bm{x}_0 \in \mathcal{X}\setminus \mathcal{X}_r)$. Thus, $\mathbb{P}^{\infty}(\widehat{\bm{\phi}}_{\pi}^{\bm{x}_0}(i)\in \mathcal{X}_r  \mid \bm{x}_0 \in \mathcal{X}\setminus \mathcal{X}_r)\leq \mathbb{P}^{\infty}(\widehat{\bm{\phi}}_{\pi}^{\bm{x}_0}(j)\in \mathcal{X}_r  \mid \bm{x}_0 \in \mathcal{X}\setminus \mathcal{X}_r)$ if $i\leq j$.
\end{lemma}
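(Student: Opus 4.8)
The plan is to establish the identity pathwise — i.e., for every fixed disturbance signal $\pi \in \Omega$ — and then read off the probability equality, mirroring the informal reasoning that precedes the lemma and the structure of Lemma \ref{lema1}. Both events appearing in the claimed equality are measurable subsets of $\Omega$, so it suffices to show they coincide as sets. I would first record the driving observation about the switched dynamics: starting from $\bm{x}_0 \in \mathcal{X}\setminus \mathcal{X}_r = \widehat{\mathcal{X}}_1$, as long as the state remains in $\widehat{\mathcal{X}}_1$ the map $\widehat{\bm{f}}$ coincides with $\bm{f}$, so $\widehat{\bm{\phi}}_{\pi}^{\bm{x}_0}$ and $\bm{\phi}_{\pi}^{\bm{x}_0}$ agree; the instant the state enters $\mathcal{X}_r = \widehat{\mathcal{X}}_2$ or $\widehat{\mathcal{X}}\setminus \mathcal{X} = \widehat{\mathcal{X}}_3$ the dynamics freeze (since $\widehat{\bm{f}}_2 = \widehat{\bm{f}}_3 = \mathrm{id}$), and by the one-step-reachability inclusion $\widehat{\Omega}\subset \widehat{\mathcal{X}}$ the trajectory never leaves $\widehat{\mathcal{X}}$ (the robust invariance noted above).

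Next I would prove the two inclusions. For the forward direction, on the event that some $k\le i$ has $\bm{\phi}_{\pi}^{\bm{x}_0}(k)\in \mathcal{X}_r$ with $\bm{\phi}_{\pi}^{\bm{x}_0}(l)\in \mathcal{X}$ for all $l\le k$, I take the minimal such $k^\ast$; minimality together with $\bm{x}_0\notin \mathcal{X}_r$ forces $\bm{\phi}_{\pi}^{\bm{x}_0}(l)\in \mathcal{X}\setminus \mathcal{X}_r$ for $l<k^\ast$, so the switched and original trajectories coincide up to $k^\ast$, giving $\widehat{\bm{\phi}}_{\pi}^{\bm{x}_0}(k^\ast)\in \mathcal{X}_r$; the freezing property then propagates this to $\widehat{\bm{\phi}}_{\pi}^{\bm{x}_0}(i)\in \mathcal{X}_r$ since $i\ge k^\ast$. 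For the reverse direction, if $\widehat{\bm{\phi}}_{\pi}^{\bm{x}_0}(i)\in \mathcal{X}_r$, let $k^\ast\le i$ be the first entry time into $\mathcal{X}_r$; before $k^\ast$ the switched state must lie in $\widehat{\mathcal{X}}_1$ (it is not yet in $\mathcal{X}_r$, and had it entered $\widehat{\mathcal{X}}_3$ it would be frozen there and could never reach $\mathcal{X}_r$), so again the two trajectories agree on $[0,k^\ast]$, yielding $\bm{\phi}_{\pi}^{\bm{x}_0}(k^\ast)\in \mathcal{X}_r$ and $\bm{\phi}_{\pi}^{\bm{x}_0}(l)\in \mathcal{X}$ for all $l\le k^\ast$ — precisely the original event. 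The equality of the two probabilities follows at once.

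For the monotonicity claim I would use the freezing property directly: since the switched trajectory remains in $\mathcal{X}_r$ once it arrives there, the events $A_i := \{\widehat{\bm{\phi}}_{\pi}^{\bm{x}_0}(i)\in \mathcal{X}_r\}$ are nested, $A_i\subseteq A_j$ whenever $i\le j$, whence $\mathbb{P}^{\infty}(A_i)\le \mathbb{P}^{\infty}(A_j)$. I expect the main obstacle to be phrasing the pathwise equivalence cleanly — in particular pinning down the first-entry-time argument and verifying that the switched trajectory cannot escape to $\widehat{\mathcal{X}}_3$ and then return, which is exactly where the robust invariance and the freezing of $\widehat{\bm{f}}_3$ are needed. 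The measurability of the two events is routine given the product $\sigma$-algebra on $\Omega$ and can be invoked without detailed verification.
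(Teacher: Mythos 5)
Your proof is correct and takes essentially the same route as the paper, which states Lemma \ref{lemma2} without a formal proof and instead relies on the discussion immediately preceding it: trajectories of \eqref{system} and \eqref{sdss2} coincide until the first entry into $\mathcal{X}_r$ or exit from $\mathcal{X}$, the frozen dynamics make $\mathcal{X}_r$ and $\widehat{\mathcal{X}}\setminus\mathcal{X}$ absorbing, and $\widehat{\mathcal{X}}$ is robustly invariant, yielding both the pathwise event equality and the monotonicity. Your first-entry-time argument simply makes that informal reasoning rigorous.
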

\subsection{Sufficient Conditions for Upper Bounds}
\label{sub:ubfr}
In this subsection, we introduce our sufficient condition for upper bounding the probability in Problem \ref{ravoid}.

The sufficient condition requires the existence of a non-negative function $v(\bm{x})$ defined on the set $\widehat{\mathcal{X}}$ that satisfies two key properties: (1) it must be greater than or equal to one on the target set $\mathcal{X}_r$, and (2) its $\alpha$-scaled expected value at the next step under the dynamics in \eqref{system} should increase by at most $\alpha \beta$ relative to its current value for $\bm{x}\in \mathcal{X}\setminus \mathcal{X}_r$, where $\alpha \in (0,1]$, and $\beta \in [0,1]$. Upper bounds are derived as follows. Lemma 2 equates the probability of system \eqref{sdss2} entering $\mathcal{X}_r$ at $t=N$ with the cumulative probability of system \eqref{system} reaching $\mathcal{X}_r$ by time $N$ without leaving $\mathcal{X}$ before hitting the target. Then, we derive upper bounds using system \eqref{sdss2} by analyzing $\alpha$. Since system \eqref{sdss2} remains stationary when starting from $\widehat{\mathcal{X}}\setminus \mathcal{X}$ and $\mathcal{X}_r$, implying $\mathbb{E}^{\infty}[v(\widehat{\bm{\phi}}_{\pi}^{\bm{x}}(1))]=v(\bm{x})$ for $\bm{x}\in (\widehat{\mathcal{X}}\setminus \mathcal{X})\cup \mathcal{X}_r$, we reformulate the constraint on $v(\bm{x})$ using system \eqref{sdss2}: its $\alpha$-scaled expected value at the next step along \eqref{sdss2} grows by at most $\alpha \beta$ relative to its current value across the entire set $\widehat{\mathcal{X}}$. For $\alpha\in (0,1)$ and $\alpha=1$, the upper bounds follow from recursively applying this reformulated constraint.
 
\begin{theorem}
    \label{pro_e4}
    If there exist a function $v(\bm{x})\colon\widehat{\mathcal{X}}\rightarrow \mathbb{R}$, $\alpha \in (0,1]$, and $\beta \in [0,1]$ such that 
    \begin{equation}
   \label{ebr_4}
   \begin{cases}
   \mathbb{E}^{\infty}[v(\bm{\phi}_{\pi}^{\bm{x}}(1))]\leq \frac{v(\bm{x})}{\alpha} +\beta, & \forall \bm{x}\in \mathcal{X}\setminus \mathcal{X}_r,\\
   v(\bm{x})\geq 1, & \forall \bm{x}\in \mathcal{X}_r,\\
    v(\bm{x})\geq 0, & \forall \bm{x}\in  \widehat{\mathcal{X}}\setminus \mathcal{X}_r,
   \end{cases}
   \end{equation}
 then $\mathbb{P}^{\infty}\big(\exists k\in \mathbb{N}_{\leq N}. \bm{\phi}^{\bm{x}_0}_{\pi}(k) \in \mathcal{X}_r\bigwedge \forall l\in \mathbb{N}_{\leq k}. \bm{\phi}^{\bm{x}_0}_{\pi}(l)\in \mathcal{X} \mid  \bm{x}_0 \in \mathcal{X}\setminus \mathcal{X}_r \big)=\mathbb{P}^{\infty}\big(\widehat{\bm{\phi}}^{\bm{x}_0}_{\pi}(N) \in \mathcal{X}_r \mid  \bm{x}_0 \in \mathcal{X}\setminus \mathcal{X}_r \big)\leq$
   \[
    \begin{cases}
       v(\bm{x}_0)+\beta N, &\text{if~}\alpha=1 \wedge \beta \in [0,1],\\
       v(\bm{x}_0) \alpha^{-N}+\frac{(1-\alpha^{-N})\alpha \beta}{\alpha-1}, & \text{if~}\alpha\in (0,1)\wedge \beta \in [0,1].
 \end{cases}
 \]  
    \end{theorem}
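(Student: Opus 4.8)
The plan is to mirror the proof of Theorem \ref{pro_e2}, working throughout with the switched system \eqref{sdss2}, whose dynamics are frozen on both the target set $\mathcal{X}_r$ and the unsafe region $\widehat{\mathcal{X}}\setminus \mathcal{X}$. The central idea is to promote the one-step drift inequality, assumed only on $\mathcal{X}\setminus \mathcal{X}_r$, to a global inequality valid on all of $\widehat{\mathcal{X}}$, and then iterate it $N$ times, closing the argument with Lemma \ref{lemma2}.

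First I would exploit the freezing behaviour: for every $\bm{x}\in \mathcal{X}_r \cup (\widehat{\mathcal{X}}\setminus \mathcal{X})$ we have $\widehat{\bm{f}}(\bm{x},\bm{\theta})=\bm{x}$, so $\mathbb{E}^{\infty}[v(\widehat{\bm{\phi}}_{\pi}^{\bm{x}}(1))]=v(\bm{x})$. To extend the drift condition to these points it suffices to verify $v(\bm{x})\leq \frac{v(\bm{x})}{\alpha}+\beta$, i.e. $(1-\tfrac{1}{\alpha})v(\bm{x})\leq \beta$. Since $\alpha\in(0,1]$ gives $1-\tfrac{1}{\alpha}\leq 0$, and since $v(\bm{x})\geq 0$ on both frozen regions (from $v\geq 1$ on $\mathcal{X}_r$ together with $v\geq 0$ on $\widehat{\mathcal{X}}\setminus \mathcal{X}_r$), the left-hand side is nonpositive while $\beta\geq 0$, so the inequality holds. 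Hence \eqref{ebr_4} upgrades to
\[\mathbb{E}^{\infty}[v(\widehat{\bm{\phi}}_{\pi}^{\bm{x}}(1))]\leq \frac{v(\bm{x})}{\alpha}+\beta,\quad \forall \bm{x}\in \widehat{\mathcal{X}},\]
and, again using $v\geq 1$ on $\mathcal{X}_r$ and $v\geq 0$ elsewhere, to the global indicator bound $v(\bm{x})\geq 1_{\mathcal{X}_r}(\bm{x})$ for all $\bm{x}\in\widehat{\mathcal{X}}$.

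Next I would iterate the global drift inequality. Because $\widehat{\mathcal{X}}$ is robustly invariant for \eqref{sdss2} and the disturbances are i.i.d., conditioning step by step yields $\mathbb{E}^{\infty}[v(\widehat{\bm{\phi}}_{\pi}^{\bm{x}_0}(k+1))]\leq \tfrac{1}{\alpha}\mathbb{E}^{\infty}[v(\widehat{\bm{\phi}}_{\pi}^{\bm{x}_0}(k))]+\beta$, so an easy induction gives $\mathbb{E}^{\infty}[v(\widehat{\bm{\phi}}_{\pi}^{\bm{x}_0}(N))]\leq \alpha^{-N}v(\bm{x}_0)+\beta\sum_{i=0}^{N-1}\alpha^{-i}$. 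Evaluating the geometric sum (as $N$ when $\alpha=1$, and as $\frac{\alpha(1-\alpha^{-N})}{\alpha-1}$ when $\alpha\in(0,1)$) produces exactly the two claimed right-hand sides. Combining this with the indicator bound, $\mathbb{P}^{\infty}(\widehat{\bm{\phi}}_{\pi}^{\bm{x}_0}(N)\in \mathcal{X}_r)=\mathbb{E}^{\infty}[1_{\mathcal{X}_r}(\widehat{\bm{\phi}}_{\pi}^{\bm{x}_0}(N))]\leq \mathbb{E}^{\infty}[v(\widehat{\bm{\phi}}_{\pi}^{\bm{x}_0}(N))]$, and Lemma \ref{lemma2} converts the switched-system hitting probability into the reach-avoid probability of the original system, completing the proof.

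As for the main obstacle, none of the algebra is delicate, so the only real content is the extension step on the two frozen regions. The point worth double-checking is that $v$ is guaranteed nonnegative on all of $\widehat{\mathcal{X}}\setminus \mathcal{X}_r$, not merely on $\mathcal{X}\setminus \mathcal{X}_r$; this is precisely what the third constraint in \eqref{ebr_4} supplies, and it is what makes the sign argument $(1-\tfrac{1}{\alpha})v(\bm{x})\leq 0\leq\beta$ hold simultaneously on $\mathcal{X}_r$ and on $\widehat{\mathcal{X}}\setminus \mathcal{X}$. Compared with Theorem \ref{pro_e2}, there is no analogue of the $\gamma\in(-\infty,0)$ branch here, because the restriction $\beta\in[0,1]$ forces $\gamma=\beta\alpha-(\alpha-1)\in[0,1]$, so only the two monotone-summation cases survive.
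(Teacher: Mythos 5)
Your proposal is correct and follows essentially the same route as the paper's proof: freeze-and-extend the drift inequality to $\mathcal{X}_r\cup(\widehat{\mathcal{X}}\setminus\mathcal{X})$ via the sign argument $(1-\tfrac{1}{\alpha})v(\bm{x})\leq 0\leq\beta$, upgrade the second constraint to $v\geq 1_{\mathcal{X}_r}$ on all of $\widehat{\mathcal{X}}$, iterate the one-step bound $N$ times, and close with Lemma \ref{lemma2}. The only cosmetic difference is that you treat $\alpha=1$ and $\alpha\in(0,1)$ uniformly through the geometric sum, where the paper handles $\alpha\in(0,1)$ first and notes the $\alpha=1$ case follows identically; your closing remark on why no $\gamma<0$ branch arises is also consistent with the paper's setup.
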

    \begin{proof}
    1) We first justify the case of $\alpha\in (0,1)\wedge \beta \in [0,1]$.
    
    Since $\mathbb{E}^{\infty}[v(\widehat{\bm{\phi}}_{\pi}^{\bm{x}}(1))]=v(\bm{x})$ for $\bm{x}\in (\widehat{\mathcal{X}}\setminus \mathcal{X}) \cup \mathcal{X}_r$, we can obtain 
\[\mathbb{E}^{\infty}[v(\widehat{\bm{\phi}}_{\pi}^{\bm{x}}(1))]-\frac{1}{\alpha}v(\bm{x})=(1-\frac{1}{\alpha}) v(\bm{x}), \forall \bm{x}\in (\widehat{\mathcal{X}}\setminus \mathcal{X}) \cup \mathcal{X}_r.\]
Therefore, \[\mathbb{E}^{\infty}[v(\widehat{\bm{\phi}}_{\pi}^{\bm{x}}(1))]-\frac{1}{\alpha}v(\bm{x})\leq 0 \leq \beta, \forall \bm{x}\in (\widehat{\mathcal{X}}\setminus \mathcal{X}) \cup \mathcal{X}_r,\] which can be justified based on the facts that $v(\bm{x})\geq 0$ for $\bm{x}\in \widehat{\mathcal{X}}$, $\alpha\in (0,1)$ and $\beta \in [0,1]$. Therefore, if $v(\bm{x})$ satisfies \eqref{ebr_4}, it will satisfy 
     \begin{equation}
   \label{ebs_3}
   \begin{cases}
   \mathbb{E}^{\infty}[v(\widehat{\bm{\phi}}_{\pi}^{\bm{x}}(1))]\leq \frac{v(\bm{x})}{\alpha} +\beta, & \forall \bm{x}\in \widehat{\mathcal{X}},\\
   v(\bm{x})\geq 1_{\mathcal{X}_r}(\bm{x}), & \forall \bm{x}\in \widehat{\mathcal{X}}.
   \end{cases}
   \end{equation}

    According to \eqref{ebs_3}, we have  
    \[
    \begin{split}
    &v(\bm{x}_0)\geq 1_{\mathcal{X}_r}(\bm{x}_0),\\
    &\alpha^{-1}v(\bm{x}_0)+\beta\geq \mathbb{E}^{\infty}[v(\widehat{\bm{\phi}}_{\pi}^{\bm{x}_0}(1))]\geq \mathbb{E}^{\infty}[1_{\mathcal{X}_r}(\widehat{\bm{\phi}}_{\pi}^{\bm{x}_0}(1))],\\
    &\ldots,\\
    &\alpha^{-N}v(\bm{x}_0)+\beta \sum_{i=0}^{N-1} \alpha^{-i} \geq  \mathbb{E}^{\infty}[v(\widehat{\bm{\phi}}_{\pi}^{\bm{x}_0}(N))]\geq \mathbb{E}^{\infty}[1_{\mathcal{X}_r}(\widehat{\bm{\phi}}_{\pi}^{\bm{x}_0}(N))].
    \end{split}
    \]
    Thus, 
 \[
 \begin{split}
 &\mathbb{P}^{\infty}\big(\widehat{\bm{\phi}}^{\bm{x}_0}_{\pi}(N) \in \mathcal{X}_r \mid  \bm{x}_0 \in \mathcal{X}\setminus \mathcal{X}_r \big)=\mathbb{E}^{\infty}[1_{\mathcal{X}_r}(\widehat{\bm{\phi}}_{\pi}^{\bm{x}_0}(N))]\\
 &\leq \alpha^{-N}v(\bm{x}_0)+\alpha \beta \frac{(1-\alpha^{-N})}{\alpha-1}.
 \end{split}
 \]According to Lemma \ref{lemma2}, we have the conclusion.

    2) The conclusion on the case of $\alpha=1\wedge \beta \in [0,1]$ can be justified by following the proof for the above one.

The proof is completed.
   \end{proof}

Similar to the analysis for the parameter $\beta$ in Theorem \ref{pro_e2}, the parameter $\beta$ in Theorem \ref{pro_e4} is also required to be less than or equal to 1, which is subject to a practical restriction. Assigning a value greater than one to $\beta$ would result in bounds exceeding unity when $N \geq 1$, rendering them ineffective. 

On the other hand, the analysis for the upper bounds in Theorem \ref{pro_e4} is similar to the one in Theorem \ref{pro_e2}. When $\alpha=1$ and $\beta=0$, the upper bound is $v(\bm{x}_0)$. It is an upper bound of the probability $\mathbb{P}^{\infty}\big(\exists k\in \mathbb{N}. \bm{\phi}^{\bm{x}_0}_{\pi}(k) \in \mathcal{X}_r \wedge \forall l\in \mathbb{N}_{\leq k}. \bm{\phi}^{\bm{x}_0}_{\pi}(k) \in \mathcal{X} \mid  \bm{x}_0 \in \mathcal{X}\setminus \mathcal{X}_r \big)$, which is the probability of reaching the target set $\mathcal{X}_r$ eventually while staying within the safe set $\mathcal{X}$ before the first target hitting time.

\begin{remark}
\label{comparionwithsan}
In Theorem \ref{pro_e4}, $\alpha$ is limited to be in $(0,1]$. We can obtain upper bounds for the case with $\alpha>1$ according to Theorem 3 in Chapter 3 in \cite{kushner1967stochastic} or Proposition \ref{theorem3} in Section \ref{sec:comp}, as shown in Corollary \ref{pro_e41}. 
\begin{corollary}
    \label{pro_e41}
    If there exist a continuous non-negative function $v(\bm{x})\colon\widehat{\mathcal{X}}\rightarrow \mathbb{R}$, $\alpha \in [1,\infty)$, and $\beta \in [0,1]$ such that 
    \begin{equation}
   \label{ebr_41}
   \begin{cases}
   v(\bm{x}_0)<1,\\
   \mathbb{E}^{\infty}[v(\bm{\phi}_{\pi}^{\bm{x}}(1))]\leq \frac{v(\bm{x})}{\alpha} +\beta, & \forall \bm{x}\in \mathcal{X}\setminus \mathcal{X}_r,\\
   (1-\frac{1}{\alpha})v(\bm{x})-\beta\leq 0, &\forall \bm{x} \in \widehat{\mathcal{X}}\setminus \mathcal{X},\\
   v(\bm{x})\geq 1, & \forall \bm{x}\in \mathcal{X}_r, \\
    v(\bm{x})\geq 0, & \forall \bm{x}\in \widehat{\mathcal{X}},
   \end{cases}
   \end{equation}
 then $\mathbb{P}^{\infty}\big(\exists k\in \mathbb{N}_{\leq N}. \bm{\phi}^{\bm{x}_0}_{\pi}(k) \in \mathcal{X}_r\bigwedge \forall l\in \mathbb{N}_{\leq k}. \bm{\phi}^{\bm{x}_0}_{\pi}(l)\in \mathcal{X} \mid  \bm{x}_0 \in \mathcal{X}\setminus \mathcal{X}_r \big)\leq$
   \begin{equation}
   \label{upper}
    \begin{cases}
       v(\bm{x}_0)+\beta N, &\text{if~}\alpha=1, \\
       v(\bm{x}_0) \alpha^{-N}+\frac{(1-\alpha^{-N})\alpha \beta}{\alpha-1}, & \text{if~}\alpha>1 \wedge \frac{\beta \alpha}{\alpha-1}>1,\\
       1-(1-v(\bm{x}_0)) (1-\beta)^{N}, & \text{if~} \alpha>1 \wedge \frac{\beta \alpha}{\alpha-1}\leq 1.
 \end{cases}
 \end{equation}
    \end{corollary}    
    \begin{proof}
Like in \cite{santoyo2021barrier}, consider the stopped version of the stochastic process satisfying \eqref{system}, which ceases evolving upon exiting the set $\mathcal{X}$.  Given a disturbance signal $\pi$, the trajectory of the stopped process, starting from $\bm{x}_0$, is denoted by $\{\bar{\bm{\phi}}_{\pi}^{\bm{x}_0}(i)\}_{i\in \mathbb{N}}$, which satisfies 
\begin{equation*}
\begin{cases}
        \bar{\bm{\phi}}_{\pi}^{\bm{x}_0}(i+1)=1_{\widehat{\mathcal{X}}\setminus \mathcal{X}}(\bar{\bm{\phi}}_{\pi}^{\bm{x}_0}(i))+1_{\mathcal{X}}(\bar{\bm{\phi}}_{\pi}^{\bm{x}_0}(i))\bm{f}(\bar{\bm{\phi}}_{\pi}^{\bm{x}_0}(i),\bm{\theta}(i)),\\
    \bar{\bm{\phi}}_{\pi}^{\bm{x}_0}(0)=\bm{x}_0.
\end{cases}
\end{equation*}
It is easy to conclude that any sample trajectory for the stopped process starting from any state in $\widehat{\mathcal{X}}$ cannot leave the set $\widehat{\mathcal{X}}$, and $\{\pi\mid \exists k\in \mathbb{N}_{\leq N}. \bar{\bm{\phi}}^{\bm{x}_0}_{\pi}(k) \in \mathcal{X}_r\}$=$\{\pi\mid \exists k\in \mathbb{N}_{\leq N}. \bm{\phi}^{\bm{x}_0}_{\pi}(k) \in \mathcal{X}_r\bigwedge \forall l\in \mathbb{N}_{\leq k}. \bm{\phi}^{\bm{x}_0}_{\pi}(l)\in \mathcal{X} \}$ for $\bm{x}_0\in \widehat{\mathcal{X}}\setminus \mathcal{X}$. Therefore, the probability $\mathbb{P}^{\infty}\big(\exists k\in \mathbb{N}_{\leq N}. \bar{\bm{\phi}}^{\bm{x}_0}_{\pi}(k) \in \mathcal{X}_r \mid \bm{x}_0 \in \mathcal{X}\setminus \mathcal{X}_r \big)$ of reaching $\mathcal{X}_r$ for the stopped process  is equivalent to the reach-avoid probability $\mathbb{P}^{\infty}\big(\exists k\in \mathbb{N}_{\leq N}. \bm{\phi}^{\bm{x}_0}_{\pi}(k) \in \mathcal{X}_r\bigwedge \forall l\in \mathbb{N}_{\leq k}. \bm{\phi}^{\bm{x}_0}_{\pi}(l)\in \mathcal{X} \mid \bm{x}_0 \in \mathcal{X}\setminus \mathcal{X}_r \big)$. 

According to Theorem 3 in Chapter 3 in \cite{kushner1967stochastic} (or, Proposition \ref{theorem3} in Section \ref{sec:comp}), we have that if there exist a continuous non-negative function $v(\bm{x})\colon\widehat{\mathcal{X}}\rightarrow \mathbb{R}$, $\alpha \in [1,\infty)$, and $\beta \in [0,1]$ such that
    \begin{equation}
    \label{14}
   \begin{cases}
   v(\bm{x}_0)<1,\\
   \mathbb{E}^{\infty}[v(\bar{\bm{\phi}}_{\pi}^{\bm{x}}(1))]\leq \frac{v(\bm{x})}{\alpha} +\beta, & \forall \bm{x}\in \widehat{\mathcal{X}}\setminus \mathcal{X}_r,\\
   v(\bm{x})\geq 1, & \forall \bm{x}\in \mathcal{X}_r, \\
    v(\bm{x})\geq 0, & \forall \bm{x}\in \widehat{\mathcal{X}},
   \end{cases}
   \end{equation}
   then $\mathbb{P}^{\infty}\big(\exists k\in \mathbb{N}_{\leq N}. \bar{\bm{\phi}}^{\bm{x}_0}_{\pi}(k) \in \mathcal{X}_r \big)$ has upper bounds in \eqref{upper}, and thus $\mathbb{P}^{\infty}\big(\exists k\in \mathbb{N}_{\leq N}. \bm{\phi}^{\bm{x}_0}_{\pi}(k) \in \mathcal{X}_r\bigwedge \forall l\in \mathbb{N}_{\leq k}. \bm{\phi}^{\bm{x}_0}_{\pi}(l)\in \mathcal{X} \mid \bm{x}_0 \in \mathcal{X}\setminus \mathcal{X}_r \big)$ has upper bounds in \eqref{upper}. 
 
  On the other hand, since $\bar{\bm{\phi}}_{\pi}^{\bm{x}}(1)=\bm{x}$, $\forall \bm{x}\in \widehat{\mathcal{X}}\setminus \mathcal{X}, \forall \pi$, and $\bar{\bm{\phi}}_{\pi}^{\bm{x}}(1)=\bm{\phi}_{\pi}^{\bm{x}}(1)$, $\forall \bm{x}\in \mathcal{X}, \forall \pi$, we obtain that constraint \eqref{14} is equivalent to \eqref{ebr_41}. Thus, the conclusion holds and the proof is completed.
   \end{proof}

Following the comparison between constraints \eqref{ebr_2} and \eqref{ebr_33}, we conclude that if there exist a function $v(\bm{x})$, $\alpha \geq 1$,  and $\beta \in [0, 1]$ satisfying \eqref{ebr_41}, then the same function $v(\bm{x})$ with $\frac{1}{\alpha}$ and $\beta \in [0, 1]$ will satisfy \eqref{ebr_4}. Additionally, it is noted that a similar constraint, associated with the same upper bounds in \eqref{upper}, has been presented in Proposition 2 of \cite{santoyo2021barrier}, which is also derived from Theorem 3 in Chapter 3 of \cite{kushner1967stochastic} (or, Proposition \ref{theorem3}). \cite{santoyo2021barrier} utilized this constraint to determine upper bounds of the probability of reaching the set $\mathcal{X}_r$ for stopped processes that cease evolving upon exiting the interior of the set $\mathcal{X}$. The stopped process is the same as the one in the proof of Corollary \ref{pro_e41} when the set $\mathcal{X}$ is open. As shown in the proof of Corollary \ref{pro_e41}, the probability of reaching $\mathcal{X}_r$ for the stopped process in \cite{santoyo2021barrier} is equivalent to the reach-avoid probability $\mathbb{P}^{\infty}\big(\exists k\in \mathbb{N}_{\leq N}. \bm{\phi}^{\bm{x}_0}_{\pi}(k) \in \mathcal{X}_r \bigwedge \forall l\in \mathbb{N}_{\leq k}. \bm{\phi}^{\bm{x}_0}_{\pi}(l)\in \mathcal{X} \mid \bm{x}_0 \in \mathcal{X}\setminus \mathcal{X}_r \big)$ when the set $\mathcal{X}$ is open. This indicates that the safety verification problem in \cite{santoyo2021barrier} for stopped processes is actually a reach-avoid verification problem for the original process in the present work. Although the constraint in \cite{santoyo2021barrier} and the one \eqref{ebr_41} are constructed based on the same stopped process (when the set $\mathcal{X}$ is open) and Theorem 3 in Chapter 3 of \cite{kushner1967stochastic}, constraint \eqref{ebr_41} is more stringent than the one in \cite{santoyo2021barrier}, including additional conditions of $(1-\frac{1}{\alpha})v(\bm{x})-\beta\leq 0, \forall \bm{x} \in \widehat{\mathcal{X}}\setminus \mathcal{X}$ and $v(\bm{x})\geq 0, \forall \bm{x}\in \widehat{\mathcal{X}}\setminus \mathcal{X}$. To the best of my knowledge, I believe that these additional constraints are necessary and cannot be omitted according to Theorem 3 in Chapter 3 of \cite{kushner1967stochastic} (or, Proposition \ref{theorem3}).
\end{remark}

\begin{remark}
 Under the assumption that $\bm{f}(\cdot,\cdot)\colon \mathcal{X}\times \Theta \rightarrow \mathcal{X}$ (i.e., $\mathcal{X}$ is a robust invariant set \footnote{Generally, a robust invariant set is fundamentally challenging, even impossible to compute even if it exists.} for system \eqref{system}), the $c$-martingale was employed in \cite{jagtap2018temporal} to certify upper bounds of the probability with which system \eqref{system}  starting from $\bm{x}_0\in \mathcal{X}\setminus \mathcal{X}_r$ will reach the set $\mathcal{X}_r$ within a specified bounded time horizon. Under this strong assumption, $c$-martingale satisfies \eqref{ebr_4} in Theorem \ref{pro_e4} with $\alpha=1$. This assumption is also imposed in \cite{zhi2024unifying}. It is worth noting that $\widehat{\mathcal{X}}$ in our conditions is not needed if $\bm{f}(\cdot,\cdot)\colon \mathcal{X}\times \Theta \rightarrow \mathcal{X}$ holds.
\end{remark}

\subsection{Sufficient Conditions for Lower Bounds}
\label{sub:lbfr}
In this subsection, we introduce our sufficient condition for lower bounding the probability in Definition \ref{ravoid}.

The sufficient condition requires a function $v(\bm{x}): \widehat{\mathcal{X}}\rightarrow \mathbb{R}$ satisfying four key properties: (1) its admits a finite upper bound over $\widehat{\mathcal{X}}$, (2) it is less than or equal to one on the target set $\mathcal{X}_r$, (3) its expected value at the next step under the dynamics in \eqref{system} exceeds its $\alpha$-scaled value by at least $\beta$ for $\bm{x}\in \mathcal{X}\setminus \mathcal{X}_r$, and (4) it is less than or equal to $-\frac{\beta}{\alpha-1}$ over the set $\widehat{\mathcal{X}}\setminus \mathcal{X}$, where $\alpha \in (1,\infty)$ and $\beta\in (1-\alpha,\infty)$. Lower bounds are derived as follows. Lemma 2 equates the probability of system \eqref{sdss2} entering $\mathcal{X}_r$ at $t=N$ with the cumulative probability of system \eqref{system} reaching $\mathcal{X}_r$ by time $N$ without leaving $\mathcal{X}$ before hitting the target. Then, we derive a lower bound using system \eqref{sdss2}. System \eqref{sdss2} remains stationary  for initial states in  $(\widehat{\mathcal{X}}\setminus \mathcal{X}) \cup \mathcal{X}_r$, implying $\mathbb{E}^{\infty}[v(\widehat{\bm{\phi}}_{\pi}^{\bm{x}}(1))]=v(\bm{x})$ for $\bm{x}\in (\widehat{\mathcal{X}}\setminus \mathcal{X})\cup \mathcal{X}_r$. Given that $v(\bm{x})\leq 1$ on $\mathcal{X}_r$ and $v(\bm{x})\leq -\frac{\beta}{\alpha-1}$ over $\widehat{\mathcal{X}}\setminus \mathcal{X}$, we reformulate the constraint over $v(\bm{x})$ using system \eqref{sdss2}: its expected value at the next step should grow by at least $\beta-(\alpha-1+\beta)1_{\widehat{\mathcal{X}}\setminus \mathcal{X}}(\bm{x})$ relative to its $\alpha$-scaled current value $v(\bm{x})$ for $\bm{x}\in \widehat{\mathcal{X}}$. Recursively applying this reformulated constraint, together with the upper bound on $v$, yields a lower bound.

\begin{theorem}
    \label{pro_e6}
      If there exist a function $v(\bm{x})\colon\widehat{\mathcal{X}}\rightarrow \mathbb{R}$ with $\sup_{\bm{x}\in \widehat{\mathcal{X}}}v(\bm{x})\leq M$, $\alpha \in (1,\infty)$, and $\beta\in (1-\alpha,\infty)$ such that 
        \begin{equation}
   \label{ebr_e6}
   \begin{cases}
   \beta+\alpha v(\bm{x})\leq \mathbb{E}^{\infty}[v(\bm{\phi}_{\pi}^{\bm{x}}(1))], & \forall \bm{x}\in \mathcal{X}\setminus \mathcal{X}_r,\\
   v(\bm{x})\leq 1, &\forall \bm{x}\in \mathcal{X}_r,\\
   (\alpha-1)v(\bm{x})\leq -\beta, &\forall \bm{x}\in \widehat{\mathcal{X}}\setminus \mathcal{X},
   \end{cases}
   \end{equation}
 then $\mathbb{P}^{\infty}\big(\exists k\in \mathbb{N}_{\leq N}. \bm{\phi}^{\bm{x}_0}_{\pi}(k) \in \mathcal{X}_r\bigwedge \forall l\in \mathbb{N}_{\leq k}. \bm{\phi}^{\bm{x}_0}_{\pi}(l)\in \mathcal{X} \mid  \bm{x}_0 \in \mathcal{X}\setminus \mathcal{X}_r \big)=\mathbb{P}^{\infty}\big(\widehat{\bm{\phi}}^{\bm{x}_0}_{\pi}(N) \in \mathcal{X}_r \mid  \bm{x}_0 \in \mathcal{X}\setminus \mathcal{X}_r \big)\geq$ 
   \[
   \frac{(\alpha^{N+1}v(\bm{x}_0)-M)(\alpha-1)+\beta(\alpha^{N+1}-1)}{(\alpha+\beta-1)(\alpha^{N+1}-1)}.
   \]
    \end{theorem}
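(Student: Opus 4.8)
The plan is to mirror the argument used for Theorem~\ref{pro_e7}, exploiting the fact that the switched system \eqref{sdss2} freezes its dynamics on both sink regions $\mathcal{X}_r$ and $\widehat{\mathcal{X}}\setminus\mathcal{X}$, so that $\mathbb{E}^{\infty}[v(\widehat{\bm{\phi}}_{\pi}^{\bm{x}}(1))]=v(\bm{x})$ for every $\bm{x}\in \mathcal{X}_r \cup (\widehat{\mathcal{X}}\setminus\mathcal{X})$. First I would collapse the three-line hypothesis \eqref{ebr_e6} into a single inequality valid on all of $\widehat{\mathcal{X}}$, namely
\[
\mathbb{E}^{\infty}[v(\widehat{\bm{\phi}}_{\pi}^{\bm{x}}(1))]\geq \alpha v(\bm{x})+\beta+(1-\alpha-\beta)1_{\mathcal{X}_r}(\bm{x}),\quad \forall \bm{x}\in \widehat{\mathcal{X}}.
\]
On $\mathcal{X}\setminus\mathcal{X}_r$ this is literally the first line of \eqref{ebr_e6} (and there $\widehat{\bm{\phi}}$ and $\bm{\phi}$ share their one-step law, so $1_{\mathcal{X}_r}(\bm{x})=0$). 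On $\mathcal{X}_r$ the frozen dynamics reduce the inequality to $(\alpha-1)v(\bm{x})\leq \alpha-1$, i.e.\ $v(\bm{x})\leq 1$, which is the second line. On $\widehat{\mathcal{X}}\setminus\mathcal{X}$ the frozen dynamics reduce it to $(\alpha-1)v(\bm{x})\leq -\beta$, which is exactly the third line. This case check is the conceptual heart of the proof and, in my view, the step requiring the most attention, since it is where the target bound $v\leq 1$ and the sink bound $(\alpha-1)v\leq-\beta$ are forced to play their distinct roles.

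Next I would iterate this pointwise inequality along trajectories. Applying it at time $k$ and taking expectations via the tower property yields the recursion
\[
\mathbb{E}^{\infty}[v(\widehat{\bm{\phi}}_{\pi}^{\bm{x}_0}(k+1))]\geq \alpha\,\mathbb{E}^{\infty}[v(\widehat{\bm{\phi}}_{\pi}^{\bm{x}_0}(k))]+\beta+(1-\alpha-\beta)\mathbb{E}^{\infty}[1_{\mathcal{X}_r}(\widehat{\bm{\phi}}_{\pi}^{\bm{x}_0}(k))].
\]
Multiplying the $k$-th instance by $\alpha^{N-k}$ and summing telescopes to a lower bound on $\mathbb{E}^{\infty}[v(\widehat{\bm{\phi}}_{\pi}^{\bm{x}_0}(N+1))]-\alpha^{N+1}v(\bm{x}_0)$ involving $\beta\sum_{i=0}^N\alpha^i$ and $(1-\alpha-\beta)\sum_{i=0}^N\alpha^{N-i}\mathbb{E}^{\infty}[1_{\mathcal{X}_r}(\widehat{\bm{\phi}}_{\pi}^{\bm{x}_0}(i))]$. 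Because $\beta>1-\alpha$ forces $1-\alpha-\beta<0$, and because Lemma~\ref{lemma2} gives $\mathbb{P}^{\infty}(\widehat{\bm{\phi}}_{\pi}^{\bm{x}_0}(i)\in\mathcal{X}_r)\leq \mathbb{P}^{\infty}(\widehat{\bm{\phi}}_{\pi}^{\bm{x}_0}(N)\in\mathcal{X}_r)$ for $i\leq N$, I can replace each $\mathbb{E}^{\infty}[1_{\mathcal{X}_r}(\widehat{\bm{\phi}}_{\pi}^{\bm{x}_0}(i))]$ by $p:=\mathbb{P}^{\infty}(\widehat{\bm{\phi}}_{\pi}^{\bm{x}_0}(N)\in\mathcal{X}_r)$ while preserving the inequality; the negative coefficient is precisely what makes the monotonicity usable here, so one must track the sign carefully.

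Finally I would close the estimate using the uniform bound $\sup_{\bm{x}\in\widehat{\mathcal{X}}}v(\bm{x})\leq M$, which gives $\mathbb{E}^{\infty}[v(\widehat{\bm{\phi}}_{\pi}^{\bm{x}_0}(N+1))]\leq M$ since $\widehat{\mathcal{X}}$ is a robust invariant for \eqref{sdss2}. Substituting the geometric sum $\sum_{i=0}^N\alpha^i=\frac{\alpha^{N+1}-1}{\alpha-1}$, isolating $p$, and dividing by the strictly positive factor $(\alpha+\beta-1)\frac{\alpha^{N+1}-1}{\alpha-1}$, then clearing the common denominator $(\alpha-1)$, produces exactly the claimed bound $\frac{(\alpha^{N+1}v(\bm{x}_0)-M)(\alpha-1)+\beta(\alpha^{N+1}-1)}{(\alpha+\beta-1)(\alpha^{N+1}-1)}$. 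The stated equality with the original reach-avoid probability is then supplied directly by Lemma~\ref{lemma2}.
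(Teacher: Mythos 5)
Your proof is correct and follows essentially the same route as the paper's: the same collapse of \eqref{ebr_e6} into the single inequality $(\alpha-1+\beta)1_{\mathcal{X}_r}(\bm{x})+\mathbb{E}^{\infty}[v(\widehat{\bm{\phi}}_{\pi}^{\bm{x}}(1))]\geq \alpha v(\bm{x})+\beta$ on all of $\widehat{\mathcal{X}}$ using the frozen dynamics, the same weighted telescoping of the resulting recursion, and the same closing step via $\sup_{\bm{x}\in\widehat{\mathcal{X}}}v(\bm{x})\leq M$, the sign of $1-\alpha-\beta$, and Lemma \ref{lemma2}. Your explicit case check on $\mathcal{X}_r$ and $\widehat{\mathcal{X}}\setminus\mathcal{X}$ simply spells out what the paper's proof leaves implicit.
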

           \begin{proof}
  Since $\mathbb{E}^{\infty}[v(\widehat{\bm{\phi}}_{\pi}^{\bm{x}}(1))]=v(\bm{x})$ for $\bm{x}\in (\widehat{\mathcal{X}}\setminus \mathcal{X}) \cup \mathcal{X}_r$, we can obtain that if $v(\bm{x})$ satisfies \eqref{ebr_e6}, it will satisfy           
   \begin{equation*}
    (\alpha-1+\beta)1_{\mathcal{X}_r}(\bm{x})+\mathbb{E}^{\infty}[v(\widehat{\bm{\phi}}_{\pi}^{\bm{x}}(1))]\geq \alpha v(\bm{x})+\beta,  \forall \bm{x}\in \widehat{\mathcal{X}}.
   \end{equation*}

Thus, for $\bm{x}_0 \in \mathcal{X}\setminus \mathcal{X}_r$, we have 
\begin{equation*}
    \begin{split}
    &\mathbb{E}^{\infty}[v(\widehat{\bm{\phi}}_{\pi}^{\bm{x}_0}(1))]-\alpha v(\bm{x}_0)\geq \beta+(1-\alpha-\beta)1_{\mathcal{X}_r}(\bm{x}_0),\\
    &\mathbb{E}^{\infty}[v(\widehat{\bm{\phi}}_{\pi}^{\bm{x}_0}(2))]-\alpha\mathbb{E}^{\infty}[v(\widehat{\bm{\phi}}_{\pi}^{\bm{x}_0}(1))]\geq \beta+(1-\alpha-\beta)\mathbb{E}^{\infty}[1_{\mathcal{X}_r}(\widehat{\bm{\phi}}_{\pi}^{\bm{x}_0}(1))],\\
    &\ldots,\\
    &\mathbb{E}^{\infty}[v(\widehat{\bm{\phi}}_{\pi}^{\bm{x}_0}(N+1))]-\alpha\mathbb{E}^{\infty}[v(\widehat{\bm{\phi}}_{\pi}^{\bm{x}_0}(N))]\geq \beta+(1-\alpha-\beta)\mathbb{E}^{\infty}[1_{\mathcal{X}_r}(\widehat{\bm{\phi}}_{\pi}^{\bm{x}_0}(N))].
    \end{split}
\end{equation*}
Thus, we can obtain 
\[
\begin{split}
&\mathbb{E}^{\infty}[v(\widehat{\bm{\phi}}_{\pi}^{\bm{x}_0}(N+1))]-\alpha^{N+1}v(\bm{x}_0)\\
&\geq \beta \sum_{i=0}^N \alpha^i+(1-\alpha-\beta) \sum_{i=0}^N \alpha^{N-i}\mathbb{E}^{\infty}[1_{\mathcal{X}_r}(\widehat{\bm{\phi}}_{\pi}^{\bm{x}_0}(i))]\\
&\geq \beta \frac{1-\alpha^{N+1}}{1-\alpha}+(1-\alpha-\beta) \frac{1-\alpha^{N+1}}{1-\alpha} \times \mathbb{P}^{\infty}\big(\widehat{\bm{\phi}}^{\bm{x}_0}_{\pi}(N) \in \mathcal{X}_r \mid  \bm{x}_0 \in \mathcal{X}\setminus \mathcal{X}_r \big)
\end{split}. \]
The last inequality is obtained via Lemma \ref{lemma2}, which states $\mathbb{P}^{\infty}\big(\widehat{\bm{\phi}}^{\bm{x}_0}_{\pi}(N) \in \mathcal{X}_r \mid  \bm{x}_0 \in \mathcal{X}\setminus \mathcal{X}_r \big)\geq \mathbb{P}^{\infty}\big(\widehat{\bm{\phi}}^{\bm{x}_0}_{\pi}(i) \in \mathcal{X}_r \mid  \bm{x}_0 \in \mathcal{X}\setminus \mathcal{X}_r \big)$ for $i\leq N$, and the fact that $1-\alpha-\beta<0$.

Consequently,
\[
\begin{split}
&\mathbb{P}^{\infty}\big(\widehat{\bm{\phi}}^{\bm{x}_0}_{\pi}(N) \in \mathcal{X}_r \mid  \bm{x}_0 \in \mathcal{X}\setminus \mathcal{X}_r \big)\geq \frac{(\alpha^{N+1}v(\bm{x}_0)-M)(\alpha-1)+\beta(\alpha^{N+1}-1)}{(\alpha+\beta-1)(\alpha^{N+1}-1)}
\end{split}.
\]
According to Lemma \ref{lemma2}, we have the conclusion. The proof is completed.
\end{proof}

\begin{remark}
Comparing Theorem \ref{pro_e7} and \ref{pro_e6}, we observe that $\alpha$ cannot be equal to one in Theorem \ref{pro_e6}. Since if $\alpha=1$, we have $\beta>0$ from $\beta \in (1-\alpha,\infty)$. However, we have  $\beta \leq 0$ from $(\alpha-1)v(\bm{x})\leq -\beta, \forall \bm{x}\in \widehat{\mathcal{X}}\setminus \mathcal{X}$. This is a contradiction. 
\end{remark}

\section{Examples}
\label{sec:ex}
In this section, we demonstrate the performance of the proposed conditions for safety and reach-avoid verification on two numerical examples, using the semi-definite programming tool Mosek \cite{aps2019mosek}.

\begin{example}
\label{ex2}
Consider the following one-dimensional discrete-time system:
\begin{equation*}
x(l+1)=x(l)+d(l),
\end{equation*}
where $d(\cdot)\colon\mathbb{N}\rightarrow \Theta=[-0.1,0.1]$, $\mathcal{X}=\{\,x\mid h(x)\leq 0\,\}$ with $h(x)=x^2-1$, $\bm{x}_0=0.2$, $\mathcal{X}_r=\{\,x\mid (x-0.9)^2-10^{-4}\leq 0\,\}$, and $N=30$. Besides, we assume that the probability distribution on $\Theta$ is the uniform distribution. The probabilities in Problem \ref{safe} and \ref{ravoid} obtained via Monte Carlo methods, which used $10^4$ sample paths, are around 0.0085 and 0.0128, respectively.

Given the small exact probabilities in Problems \ref{safe} and \ref{ravoid}, we only estimate their upper bounds in this example. The set $\widetilde{\mathcal{X}} = \widehat{\mathcal{X}} = \{x \mid x^2 - 2 \leq 0\}$ is employed in solving \eqref{ebr_2}, \eqref{ebr_33}, \eqref{ebr_4}, and \eqref{ebr_41}. To address these constraints, we encode them into semi-definite programs with sum of squares decomposition techniques for multivariate polynomials.  Utilizing polynomials $v(\bm{x})$ of varying degrees, the computed upper bounds are summarized in Tables \ref{tab:my_label3} and \ref{tab:my_label4}. Table \ref{tab:my_label3} illustrates that the constraint \eqref{ebr_2} with $\alpha=\frac{1}{1.1}$ can yield tighter upper bounds for the probability in Problem \ref{safe}  compared to \eqref{ebr_2} with $\alpha=1$ (it also corresponds to the $c-$martingale in \cite{mathiesen2022safety}) and \eqref{ebr_33} with $\alpha=1.1$. Similarly, Table \ref{tab:my_label4} shows that the constraint \eqref{ebr_4} with $\alpha=\frac{1}{1.1}$  outperforms \eqref{ebr_4} with $\alpha=1$ and \eqref{ebr_41} with $\alpha=1.1$ in providing tighter bounds for the probability in Problem \ref{ravoid}. It is observed that employing higher-degree polynomials for computations facilitates the gain of tighter upper bounds of the probabilities in Problem \ref{safe} and \ref{ravoid}.

\begin{table*}[h!]
    \centering
    \begin{adjustbox}{width=\textwidth}
    \begin{tabular}{|c|c|c|c|c|c|c|c|c|c|c|c|}\hline
      \multicolumn{11}{|c|}{\eqref{ebr_33} with $\alpha=1.1$}\\\hline
       d &2&4&6&8&10&12&14&16&18&20  \\\hline
        $\epsilon_2$&0.9795& 0.9435& 0.9427& 0.9427& 0.9427& 0.9427& 0.9427& 0.9427& 0.9427& 0.9427 \\\hline  
        \multicolumn{11}{|c|}{\eqref{ebr_2} with $\alpha=\frac{1}{1.1}$} \\\hline
           d &2&4&6&8&10&12&14&16&18&20  \\\hline
        $\epsilon_2$& 0.8166& 0.1564& 0.0650& 0.0447& 0.0404& 0.0398& 0.0398&  0.0398& 0.0398& 0.0398\\\hline 
        
         \multicolumn{11}{|c|}{\eqref{ebr_2} with $\alpha=1$}\\\hline
           d &2&4&6&8&10&12&14&16&18&20  \\\hline
        $\epsilon_2$&0.1351& 0.1351& 0.1351& 0.1351& 0.1351& 0.1351& 0.1351& 0.1351& 0.1351&0.1351\\\hline
    \end{tabular}
    \end{adjustbox}
          \caption{\centering Computed upper bounds of the probability in Problem \ref{safe} in Example \ref{ex2} \\($d$ denotes the degree of the polynomial $v(\bm{x})$)}
     \label{tab:my_label3}
\end{table*}

\end{example}

\begin{table*}[htbp!]
    \centering
    \begin{adjustbox}{width=\textwidth}
    \begin{tabular}{|c|c|c|c|c|c|c|c|c|c|c|c|}\hline 
   \multicolumn{11}{|c|}{\eqref{ebr_41} with $\alpha=1.1$}\\\hline
    d &2&4&6&8&10&12&14&16&18&20  \\\hline
        $\epsilon_2$&0.9943& 0.9553&0.9428&0.9427&0.9427& 0.9427&0.9427&0.9427&0.9427& 0.9427 \\\hline  

        \multicolumn{11}{|c|}{\eqref{ebr_4} with $\alpha=\frac{1}{1.1}$}\\\hline
           d &2&4&6&8&10&12&14&16&18&20  \\\hline
              $\epsilon_2$&1& 0.2530& 0.1260&  0.0970& 0.0906& 0.0898& 0.0897&0.0897& 0.0897& 0.0897 \\\hline    
    \multicolumn{11}{|c|}{\eqref{ebr_4} with $\alpha=1$}\\\hline
           d &2&4&6&8&10&12&14&16&18&20  \\\hline
              $\epsilon_2$&0.1736&0.1736&0.1736&0.1736& 0.1736&0.1736&0.1736&0.1736&0.1736&0.1736 \\\hline  
    \end{tabular}
    \end{adjustbox}
    \caption{\centering Computed upper bounds of the probability in Problem \ref{ravoid} in Example \ref{ex2} \\($d$ denotes the degree of the polynomial $v(\bm{x})$)}
     \label{tab:my_label4}
\end{table*}

\begin{table*}[h!]
    \centering
    \begin{adjustbox}{width=\textwidth}
    \begin{tabular}{|c|c|c|c|c|c|c|c|c|c|c|c|}\hline
      \multicolumn{11}{|c|}{\eqref{ebr_33} with $\alpha=1.01$}\\\hline
       d &2&4&6&8&10&12&14&16&18&20  \\\hline
        $\epsilon_2$&0.8798& 0.7694& 0.7423& 0.7302& 0.6923& 0.6663& 0.6130& 0.6127&  0.5837& 0.5830 \\\hline  
        
        \multicolumn{11}{|c|}{\eqref{ebr_33} with $\alpha=1.001$}\\\hline
         d &2&4&6&8&10&12&14&16&18&20  \\\hline
        $\epsilon_2$&0.8169&  0.6612&  0.6041& 0.5845& 0.5316&  0.4942& 0.4131& 0.4098&  0.3615&  0.3605 \\\hline
        \multicolumn{11}{|c|}{\eqref{ebr_2} with $\alpha=\frac{1}{1.01}$} \\\hline
           d &2&4&6&8&10&12&14&16&18&20  \\\hline
        $\epsilon_2$&1.0000& 1.0000& 0.9625& 0.9235& 0.8352& 0.7731& 0.6342& 0.6236& 0.5424& 0.5397\\\hline 
        
        \multicolumn{11}{|c|}{\eqref{ebr_2} with $\alpha=\frac{1}{1.001}$}\\\hline
           d &2&4&6&8&10&12&14&16&18&20  \\\hline
        $\epsilon_2$&0.8515& 0.6877& 0.6179& 0.5929& 0.5381& 0.4983& 0.4097& 0.4027& 0.3505&0.3488\\\hline 
         \multicolumn{11}{|c|}{\eqref{ebr_2} with $\alpha=1$}\\\hline
           d &2&4&6&8&10&12&14&16&18&20  \\\hline
        $\epsilon_2$&0.8100& 0.6542& 0.5880& 0.5643& 0.5123& 0.4745& 0.3902& 0.3835& 0.3345&0.3322\\\hline 
  \multicolumn{11}{|c|}{\eqref{ebr_e7} with $\alpha=1.1$ and $\beta=0$ }\\\hline
   d &2&4&6&8&10&12&14&16&18&20  \\\hline
        $\epsilon_1$&$2.5096\times 10^{-14}$&0.0279& 0.0490& 0.0502& 0.0711& 0.0735&  0.0990& 0.1064& 0.1245& 0.1289 \\\hline 
    \end{tabular}
    \end{adjustbox}
      \caption{\centering Computed lower and upper bounds of the probability in Problem \ref{safe} in Example \ref{ex1} \\($d$ denotes the degree of the polynomial $v(\bm{x})$)}
     \label{tab:my_label}
\end{table*}

\begin{table*}[h!]
    \centering
     \begin{adjustbox}{width=\textwidth}
    \begin{tabular}{|c|c|c|c|c|c|c|c|c|c|c|c|}\hline 
   \multicolumn{11}{|c|}{\eqref{ebr_41} with $\alpha=1.001$}\\\hline
    d &2&4&6&8&10&12&14&16&18&20  \\\hline
        $\epsilon_2$&1& 0.9485& 0.9388&0.9264&0.9117&0.8988&0.8832&0.8682&0.8579&0.8513 \\\hline  

         \multicolumn{11}{|c|}{\eqref{ebr_41} with $\alpha=1.0001$}\\\hline
 d &2&4&6&8&10&12&14&16&18&20  \\\hline
        $\epsilon_2$&1& 0.9463&0.9360&0.9263&0.9077&0.8941&0.8780&0.8623&0.8515&0.8446 \\\hline 
                 \multicolumn{11}{|c|}{\eqref{ebr_41} with $\alpha=1$}\\\hline
 d &2&4&6&8&10&12&14&16&18&20  \\\hline
        $\epsilon_2$&1.0000& 0.9460& 0.9357& 0.9259& 0.9073&  0.8936&  0.8774& 0.8617& 0.8508& 0.8439 \\\hline  
        \multicolumn{11}{|c|}{\eqref{ebr_4} with $\alpha=\frac{1}{1.001}$ }\\\hline
           d &2&4&6&8&10&12&14&16&18&20  \\\hline
        $\epsilon_2$&1.0000& 0.9929& 0.9827& 0.9717& 0.9524&  0.9380&  0.9208& 0.9043& 0.8932& 0.8858\\\hline 

        \multicolumn{11}{|c|}{\eqref{ebr_4} with $\alpha=\frac{1}{1.0001}$}\\\hline
           d &2&4&6&8&10&12&14&16&18&20  \\\hline
        $\epsilon_2$&1.0000& 0.9506& 0.9403& 0.9304& 0.9117&  0.8979&  0.8816& 0.8658& 0.8550& 0.8480\\\hline 
        \multicolumn{11}{|c|}{\eqref{ebr_4} with $\alpha=1$}\\\hline
           d &2&4&6&8&10&12&14&16&18&20  \\\hline
        $\epsilon_2$&1.0000& 0.9460& 0.9357& 0.9259& 0.9073&  0.8936&  0.8774& 0.8617& 0.8508& 0.8439\\\hline 
   
  \multicolumn{11}{|c|}{\eqref{ebr_e6} with $\alpha=1.06$ and $\beta=0$}\\\hline
   d &2&4&6&8&10&12&14&16&18&20  \\\hline
        $\epsilon_1$& 0.1591& 0.2824& 0.3453&0.3669& 0.4606& 0.5218&  0.5732& 0.5778& 0.6119& 0.6128 \\\hline 
    \end{tabular}
    \end{adjustbox}
     \caption{\centering Computed lower and upper bounds of the probability in Problem \ref{ravoid} in Example \ref{ex1} \\($d$ denotes the degree of the polynomial $v(\bm{x})$)}
     \label{tab:my_label2}
\end{table*}

\begin{example}
\label{ex1}
Consider the following one-dimensional discrete-time system from \cite{yu2023safe}:
\begin{equation*}
x(l+1)=(-0.5+d(l))x(l),
\end{equation*}
where $d(\cdot)\colon\mathbb{N}\rightarrow \Theta=[-1,1]$, $\mathcal{X}=\{\,x\mid h(x)\leq 0\,\}$ with $h(x)=x^2-1$, $\bm{x}_0=-0.9$, $\mathcal{X}_r=\{\,x\mid x^2-0.36\leq 0\,\}$, and $N=50$. Besides, we assume that the probability distribution on $\Theta$ is the uniform distribution. The probabilities in Problem \ref{safe} and \ref{ravoid} obtained via Monte Carlo methods, which used $10^4$ sample paths, are around 0.2321 and 0.7708, respectively.

The set $\widetilde{\mathcal{X}}=\widehat{\mathcal{X}}=\{\,x\mid x^2-2.25\leq 0\,\}$ is used in  solving \eqref{ebr_2}, \eqref{ebr_33}, \eqref{ebr_e7}, \eqref{ebr_4}, \eqref{ebr_41}, and \eqref{ebr_e6}.  These constraints are addressed via encoding them into semi-definite programs with sum of squares decomposition techniques for multivariate polynomials. Utilizing polynomials $v(\bm{x})$ of varying degrees, the computed lower and upper bounds are summarized in Tables \ref{tab:my_label} and \ref{tab:my_label2}. It is evident from the results that employing higher-degree polynomials for computations leads to tighter lower and upper bounds of the probabilities in Problems \ref{safe} and \ref{ravoid}, and the constraints \eqref{ebr_2} and \eqref{ebr_33} can complement each other in providing upper bounds of the probability in Problem \ref{safe}. However, the performance of the constraint \eqref{ebr_41}, with $\alpha=\frac{1}{1.001}$, $\alpha=\frac{1}{1.0001}$, marginally surpasses that of \eqref{ebr_4}, with $\alpha=1.001$, $\alpha=1.0001$, in yielding tighter upper bounds for the probability in Problem \ref{ravoid}.

\end{example}

In  Example \ref{ex2} and \ref{ex1}, we initially determine upper bounds by assigning pre-defined values to the parameter $\alpha$ in constraints \eqref{ebr_2}, \eqref{ebr_33},  \eqref{ebr_4} and \eqref{ebr_41}, and lower bounds by assigning pre-defined values to the parameters $\alpha$ and $\beta$ in \eqref{ebr_e7} and \eqref{ebr_e6}, and employing convex optimization to solve them. This might yield conservative bounds. However, the automatic optimization of the function $v(\bm{x})$, $\alpha$, and $\beta$ to enhance these bounds constitutes a non-convex problem, which poses a significant challenge. Future work will address this issue, as it is beyond the scope of the current study. \textit{Furthermore, this work does not concentrate on the design of efficient algorithms to solve the associated constraints. Instead, it leverages existing semi-definite programming tools to tackle these issues. Future work will address this gap by proposing efficient algorithms to effectively address these constraints.}

\section{Conclusion}
\label{sec:con}
In this paper, we introduced novel sufficient conditions for the finite-time safety and reach-avoid verification of stochastic discrete-time dynamical systems. These conditions provide the lower and upper bounds of the probability that, within a predefined finite-time horizon, a system starting from an initial state in a safe set will either exit the safe set (safety verification) or reach a target set while remaining within the safe set until the first encounter with the target (reach-avoid verification). They complement existing criteria or bridge existing gaps in the literature. Finally, we demonstrated their performance in finite-time safety and reach-avoid verification on two numerical examples, utilizing semi-definite programming tools.

In the future, I would like to rigorously assess the conservativeness of the derived bounds through a necessity analysis of the proposed barrier-like conditions, extending the infinite-time framework developed in \cite{xue2024sufficient}. Furthermore, I would develop analysis methods for probabilistic programs by integrating these barrier-like conditions. For example, I intend to investigate termination analysis of probabilistic programs within bounded time horizons and compare the results with state-of-the-art approaches, such as those presented in \cite{chatterjee2024quantitative}.
\section*{Acknowledgements}
This work is funded by the CAS Pioneer Hundred Talents Program, Basic Research Program of  Institute of Software, CAS (Grant No. ISCAS-JCMS-202302), and NRF RSS Scheme NRF-RSS2022-009.
\bibliographystyle{abbrv}
\bibliography{reference}
\end{document}